\documentclass[twocolumn,twoside]{IEEEtran}

\IEEEoverridecommandlockouts                              % This command is only
                                                          % needed if you want to
                                                          % use the \thanks command
                                                        
% The following packages can be found on http:\\www.ctan.org
%\usepackage{graphics} % for pdf, bitmapped graphics files
%\usepackage{epsfig} % for postscript graphics files
%\usepackage{mathptmx} % assumes new font selection scheme installed
%\usepackage{times} % assumes new font selection scheme installed
%\usepackage{amsmath} % assumes amsmath package installed
%\usepackage{amssymb}  % assumes amsmath package installed

\usepackage{amsmath,amsfonts}
\usepackage[ruled, vlined]{algorithm2e}
\usepackage{algpseudocode}
\usepackage{amsmath} 
\usepackage{array}
\usepackage{subfigure}
\usepackage{textcomp}
\usepackage{stfloats}
\usepackage{url}
\usepackage{verbatim}
\usepackage{graphicx}
\usepackage{cite}

\usepackage{amsthm,amssymb}
\usepackage{color}
\usepackage{pifont}
\usepackage{hyperref}
\usepackage{footnote}

\usepackage{enumitem}
\usepackage{diagbox}
\usepackage{booktabs}
\usepackage{multirow}
\usepackage{pifont}
\usepackage{tablefootnote}
\usepackage{ragged2e}
\hypersetup{
colorlinks=true,
linkcolor=red,
citecolor=black,
}
\usepackage{flushend}

\newcommand{\Bc}{\mathcal{B}}
\newcommand{\Cc}{\mathcal{C}}

\newcommand{\Ec}{\mathcal{E}}

\newcommand{\Gc}{\mathcal{G}}

\newcommand{\Mc}{\mathcal{M}}
\newcommand{\Nc}{\mathcal{N}}

\newcommand{\Qc}{\mathcal{Q}}

\newcommand{\Sc}{\mathcal{S}}

\newcommand{\Uc}{\mathcal{U}}

\newcommand{\Xc}{\mathcal{X}}

\newcommand\dbf{\mathbf{d}}

\newcommand\mbf{\mathbf{m}}

\newcommand\pbf{\mathbf{p}}
\newcommand\qbf{\mathbf{q}}

\newcommand\vbf{\mathbf{v}}

\newcommand\xbf{\mathbf{x}}

\newcommand\Abf{\mathbf{A}}

\newcommand\Dbf{\mathbf{D}}

\newcommand\Mbf{\mathbf{M}}

\newcommand\Rbf{\mathbf{R}}

\newcommand\Xbf{\mathbf{X}}

\newcommand{\real}{\ensuremath{\mathbb{R}}}

\newcommand\phib{\boldsymbol{\phi}}

\newcommand{\ones}{\mathbf{1}}

\newcommand{\proj}{\text{Proj}}

\newtheorem{theorem}{Theorem}[section]
\newtheorem{proposition}[theorem]{Proposition}

\theoremstyle{definition}
\newtheorem{remark}[theorem]{Remark}

\newtheorem{problem}[theorem]{Problem}

\newcommand{\oprocendsymbol}{\hbox{$\bullet$}}
\newcommand{\oprocend}{\relax\ifmmode\else\unskip\hfill\fi\oprocendsymbol}

\newcommand{\longthmtitle}[1]{\mbox{}{\textit{(#1):}}}

\IEEEaftertitletext{\vspace{-1\baselineskip}} 

% \usepackage{titlesec}
% \renewcommand{\baselinestretch}{.98}

% !TeX spellcheck = en_US 

\makeatletter
\def\footnoterule{\kern-3\p@
  \hrule \@width 2in \kern 2.6\p@} % the \hrule is .4pt high
\makeatother

\title{Stability Constrained Voltage Control in Distribution Grids with Arbitrary Communication Infrastructure}

\author{Zhenyi Yuan,~\IEEEmembership{Graduate Student Member, IEEE}, Jie Feng,~\IEEEmembership{Graduate Student Member, IEEE}, \\
Yuanyuan Shi,~\IEEEmembership{Member, IEEE}, and Jorge Cort\'es,~\IEEEmembership{Fellow, IEEE}% <-this % stops a space
% \thanks{This work was supported by ...}% <-this % stops a space
\thanks{This work was supported by NSF Award ECCS-1947050 and ECCS-2200692. The work of J. Feng was supported by the UC-National Laboratory In Residence Graduate Fellowship L24GF7923. }
\thanks{Z. Yuan and J. Cort\'es are with the Department of Mechanical and Aerospace Engineering, University of California, San Diego, La Jolla, CA 92093, USA,
{\tt\small \{z7yuan,cortes\}@ucsd.edu}. J. Feng and Y. Shi are with the Department of Electrical and Computer Engineering, University of California, San Diego, La Jolla, CA 92093, USA,
{\tt\small \{jif005,yyshi\}@ucsd.edu}.}%
}

\begin{document}

\maketitle
% \thispagestyle{empty} %--to make title page number less
% \pagestyle{empty}    % -- to make other pages number less.

%%%%%%%%%%%%%%%%%%%%%%%%%%%%%%%%%%%%%%%%%%%%%%%%%%%%%%%%%%%%%%%%%%%%%%%%%%%%%%%%

\begin{abstract}
We consider the problem of designing learning-based reactive power controllers that perform voltage regulation in distribution grids while ensuring closed-loop system stability. In contrast to existing methods, where the provably stable controllers are restricted to be decentralized, we propose a unified design framework that enables the controllers to take advantage of an \emph{arbitrary} communication infrastructure on top of the physical power network. This allows the controllers to incorporate information beyond their local bus, covering existing methods as a special case and leading to less conservative constraints on the controller design. We then provide a design procedure to construct
input convex neural network (ICNN) based controllers that satisfy the identified stability constraints by design under arbitrary communication scenarios, and train these controllers using supervised learning. Simulation results on the the University of California, San Diego (UCSD) microgrid testbed illustrate the effectiveness of the framework and highlight the role of communication in improving control performance. 
\end{abstract}
\begin{IEEEkeywords}
Voltage control, distributed energy resources, closed-loop stability, machine learning.
\end{IEEEkeywords}

%%%%%%%%%%%%%%%%%%%%%%%%%%%%%%%%%%%%%%%%%%%%%%%%%%%%%%%%%%%%%%%%%%%%%%%%%%%%%%%%

\section{Introduction}

The increasing penetration of distributed energy resources (DERs) poses great challenges to system operations and stability of distribution grids (DGs). 
For instance, the intermittence of renewable energy sources can lead to rapid and unpredictable fluctuations in the load and generation profiles, causing large voltage variations. This has motivated the study of Volt/Var control strategies which aim to regulate voltages within acceptable preassigned limits by commanding DERs' reactive power injections~\cite{PS-RH-VJN-VV-AMA-AKS:23}. However, without appropriate design, the Volt/Var controllers may render the closed-loop system unstable, resulting in voltage oscillations. In this paper, we propose a unified framework to design Volt/Var control strategies that are amenable to DGs with various communication topologies and rigorously guarantee the stability of the closed-loop system.

\subsubsection*{Literature Review}
Initial attempts to ensure closed-loop stability with Volt/Var controllers start from decentralized designs~\cite{NL-GQ-MD:14, HZ-HJL:15, GC-RC:17, XZ-MF-ZL-LC-SHL:21}, and is also recently revisited by~\cite{SG-AMS-SC-VK:23,IM-SG-SC-VK:24,AC-EP-MG-ED:25}, where the monotonicity of the decentralized controllers is identified as the key property to ensure the asymptotic stability. However, the controller form in these works is constrained to be piece-wise linear, as advocated by the IEEE 1547 Standard~\cite{IEEE1547}. Under these designs, a certain level of steady-state optimality can be achieved only for specific cost function forms.
Moreover, the piece-wise linear form of the controllers further prohibits the improvement of performance, and does not take advantage of the fact that inverter-based DERs can implement almost arbitrary control laws thanks to the flexibility in power electronic interfaces. 

Motivated by the above factors, recent works~\cite{YS-GQ-SL-AA-AW:22, WC-JL-BZ:22, JF-YS-GQ-SHL-AA-AW:24, JF-WC-JC-YS:23-csl} generalize the form of decentralized Volt/Var controller from piece-wise linear to nonlinear functions parameterized by neural networks, which is more flexible, and still ensures asymptotic stability as long as it satisfies the same monotonicity property. However, these works ask the controllers to be continuously differentiable and assume there are no reactive power regulation constraints. The works~\cite{GC-ZY-MKS-JC:22-cdc, ZY-GC-MKS-JC:24-tps} improve them by relaxing the differentiability requirements and taking reactive power regulation constraints into account, followed by~\cite{ZY-GC-JC:23-csl}, which explores other types of constraints on controllers to ensure closed-loop system stability.
Based on these results, machine learning techniques, e.g., reinforcement learning~\cite{YS-GQ-SL-AA-AW:22, WC-JL-BZ:22, JF-YS-GQ-SHL-AA-AW:24, JF-WC-JC-YS:23-csl}, supervised learning~\cite{GC-ZY-MKS-JC:22-cdc, ZY-GC-MKS-JC:24-tps, ZY-GC-JC:23-csl}, and unsupervised learning~\cite{ZY-GC-AZ-JC:24}, paired with neural network parameterized controllers that satisfy the stability constraints by design, are employed to enhance the optimality of controllers. We remark that~\cite{YS-GQ-SL-AA-AW:22, WC-JL-BZ:22, JF-YS-GQ-SHL-AA-AW:24, JF-WC-JC-YS:23-csl} focus on transient performance optimization, while~\cite{GC-ZY-MKS-JC:22-cdc, ZY-GC-MKS-JC:24-tps, ZY-GC-JC:23-csl, ZY-GC-AZ-JC:24} focus on steady-state performance.

While tremendous progress has been made, the above works all employ \emph{decentralized} designs. DGs are often equipped with a communication infrastructure on top of the physical network, which allows some buses to exchange information with their neighbors. Motivated by the fact that the optimal power flow (OPF) problem can be solved in a \emph{distributed} fashion under appropriate connectivity conditions on the communication infrastructure, e.g.,~\cite{QP-SHL:16}, here we explore distributed designs to enhance the performance of decentralized voltage controllers. 
Relevant results include~\cite{SB-SZ:13, BZ-AYSL-ADDG-DT:14, SB-RC-GC-SZ:15, GC-RC:17, ED-AS:18, GQ-NL:20}, where distributed optimization-based feedback controllers are derived to iteratively steer the network toward OPF solutions. Theoretical analysis characterizes the convergence properties of these distributed algorithms, establishing the asymptotic stability of the closed-loop system. However, these algorithms  require that the DGs are endowed with a reliable real-time communication network that is connected and consistent with the physical network, which is rarely satisfied in practice for DGs.
While decentralized designs may incorporate optimality considerations, there is significant potential for performance improvement by leveraging available communication infrastructure that may not be consistent with the physical network (or even connected). Hence, instead of asking the controllers to recursively communicate over a restrictive communication network to converge to OPF solutions, here we study the problem of how to take advantage of an arbitrarily existing communication infrastructure to improve performance of the decentralized Volt/Var controllers in DGs by having access to more measurements, so that they behave closer to OPF solutions, and yet ensure closed-loop stability.

\begin{figure*}
    \centering    \includegraphics[width=0.98\linewidth]{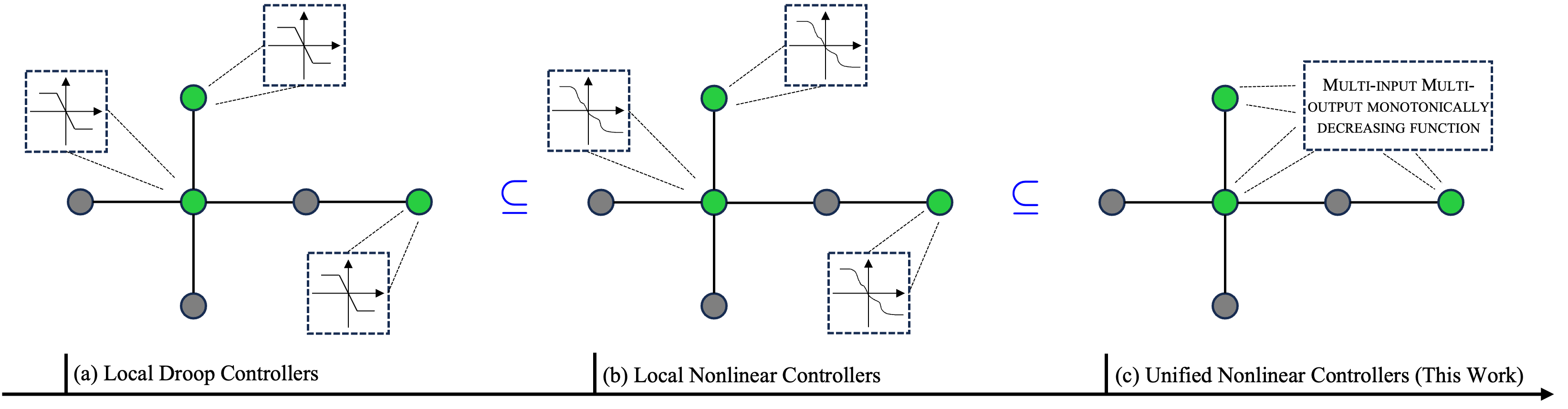}     
    \caption{Road map of the development of Volt/Var control methods with stability guarantees. The green nodes represent controllable DERs in the DG.
    }
    \label{fig:development}
\end{figure*}

\subsubsection*{Statement of Contributions}
In this paper, we propose 
a unified framework to design provably stable Volt/Var controllers for DGs with arbitrary communication infrastructures. 
The novelty of this paper is summarized as 
\begin{itemize}
    \item Compared to existing results on Volt/Var controller design to ensure closed-loop system stability, e.g.~\cite{JF-YS-GQ-SHL-AA-AW:24,ZY-GC-MKS-JC:24-tps}, where the \emph{individual} controller at each bus is required to be monotone, we identify a less-conservative stability condition by studying a new contraction metric that only requires all controllers in the network to \emph{collectively} satisfy the monotonicity constraint. Thereby, existing results can be viewed as special cases of the method proposed in this paper, see Fig.~\ref{fig:development} for illustration;
    \item We propose a systematic approach to parameterize the local-acting Volt/Var controllers, such that they can collectively satisfy the identified network-wide monotonicity constraint,
    using input convex neural networks (ICNNs). This framework is novel and unified, in the sense that it applies to arbitrary communication networks the DG is endowed with, and guarantees stability by design without requiring case-by-case adaptation to the communication infrastructure nor additional post-processing;
    \item We validate the proposed framework on a real-world testbed - the University of California, San Diego (UCSD) microgrid - using operational data collected from the campus. This evaluation on a realistic distribution system demonstrates the practical effectiveness of the proposed approach and highlights the critical role of communication in enhancing the performance of learned Volt/Var controllers for voltage control.
\end{itemize}

\subsubsection*{Notation}
Throughout the paper, $\real$ denotes the set of real numbers. 
Upper and lower case boldface letters denote matrices and column vectors, respectively. 
Given a matrix $\Abf$, $\Abf \succ~0$ denotes that matrix $\Abf$ is positive definite. 
We use $\|\cdot\|$ to represent the Euclidean norm, and $|\cdot|$ the cardinality when the argument is a set. We let $\|\xbf\|_{\Abf} = \xbf^\top \Abf \xbf$ if $\Abf$ is square.
The symbol $(\cdot)^\top$ stands for transposition, 
and $\ones$ denotes vector of all ones with appropriate dimensions.
Operator $\proj_{\Xc}(\cdot)$ represents the projection of the argument into the set $\Xc$.

%%%%%%%%%%%%%%%%%%%%%%%%%%%%%%%%%%%%%%%%%%%%%%%%%%%%%%%%%%%%%%%%%%%%%%%%%%%%%%%%

\section{Preliminaries and Problem Formulation}\label{sec:preliminaries}
In this section, we introduce the DG model and the communication network, and formulate the problem of interest.

\subsection{Distribution Grid Modeling}\label{subsec:grid_modeling}

A radial single-phase (or a balanced three-phase) DG having $N+1$ buses can be modeled by a tree graph $\Gc=(\Nc,\Ec)$ rooted at the substation. The nodes in $\Nc_0:=\{0,\ldots,N\}$ are associated with grid buses, and the edges in $\Ec$ with lines. The substation node, labeled as 0, behaves as an ideal voltage source imposing the nominal voltage of 1 p.u. and we let $\Nc = \Nc_0 \setminus \{0\}$. 
We use $\Mbf^0 = [\mbf_0^\top ; \Mbf] \in \real^{(N+1) \times N}$ to denote the incidence matrix of the graph $\Gc$, where $\mbf_0^\top$ is the row corresponds to the substation node.
The voltage magnitude at bus $n\in \Nc$ is denoted as $v_n\in \real$, and the active and reactive power injections at bus $n\in \Nc$ are $p_n,q_n\in \real$, respectively. Powers take positive (negative) values, i.e., $p_n, q_n \geq 0$ ($p_n, q_n \leq 0$), when they are \emph{injected into} (\emph{absorbed from}) the grid.
The vectors $ \vbf,\pbf, \qbf \in \real^N$ collect the voltage magnitudes, active and reactive power injections for buses $1,2,\dots N$.
Let $(m,n)$ be an edge in $\Ec$, let $r_{mn}$ and $x_{mn}$ denote its resistance and reactance, and $P_{mn}$ and $Q_{mn}$ the real and reactive power from bus $m$ to $n$, respectively.
For every line $(m,n) \in \Ec$, according to the DistFlow equations~\cite{MEB-FFW:89}, the power flow model is 
\begin{subequations}\label{eq:nonlinear_pf}
\begin{align}
P_{mn}-\sum_{(n,k) \in \Ec} P_{n k}= & -p_n+r_{mn} \frac{P_{mn}^2+Q_{mn}^2}{v_m^2},
\\
Q_{mn}-\sum_{(n,k) \in \Ec} Q_{n k}= & -q_n+x_{mn} \frac{P_{mn}^2+Q_{mn}^2}{v_m^2},
\\
v_m^2-v_n^2= & 2\left(r_{mn} P_{mn}+x_{mn} Q_{mn}\right) \notag \\
& -\left(r_{mn}^2+x_{mn}^2\right) \frac{P_{mn}^2+Q_{mn}^2}{v_m^2}.
\end{align}
We note that if the DG is connected, and there is no zero-impedance branch, then the DistFlow model is equivalent to the standard bus injection model~\cite{BS-SHL-KMC:12,TD-RL-YY-FB:19}.
\end{subequations}

\subsection{Communication Network}\label{subsec:comm_graph}
We consider the case where an undirected communication network exists in the DG. 
We say bus $n$ is a neighbor of bus $m$ if bus $m$ can receive the information broadcasted from bus $n$.
We use $\Nc_i$ to represent the set of neighbors of bus $i \in \Nc$. We note that the communication network
can be different from the physical network (e.g., buses $m$ and $n$ might be physically connected, but not neighbors of each other in the communication network). In fact, this is a common occurrence due to the fact that DGs are frequently reconfigured~\cite{DD-VK-GC:24}.
Moreover, we allow the communication network to be arbitrary and not necessarily connected, i.e., we allow the case that only partial communication is available or even the case that there is no communication at all. Hence, our setting covers all possible communication scenarios in DGs, including, e.g., full communication, distributed communication, and no communication cases, see Fig.~\ref{fig:communication_setups}.
\begin{figure}[tb]
    \centering
    \subfigure[Full communication]{
        \includegraphics[scale=0.13]{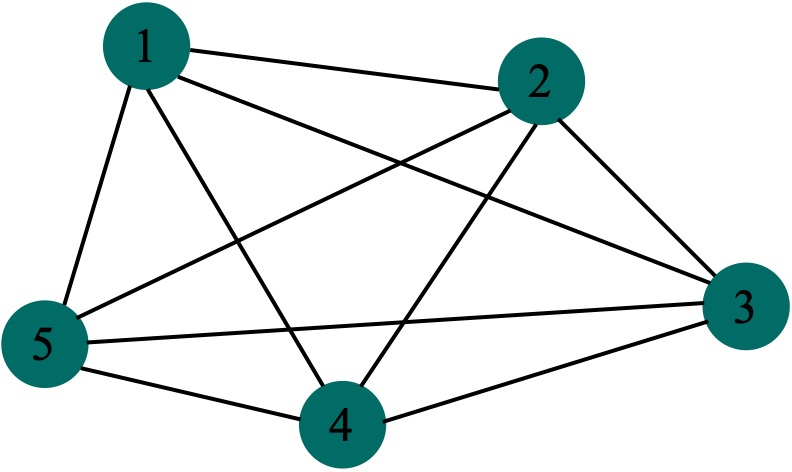}
        \label{fig:communication_setups_centralized}
    }\hspace{2ex}
    \subfigure[Distributed communication-I]{
	\includegraphics[scale=0.13]{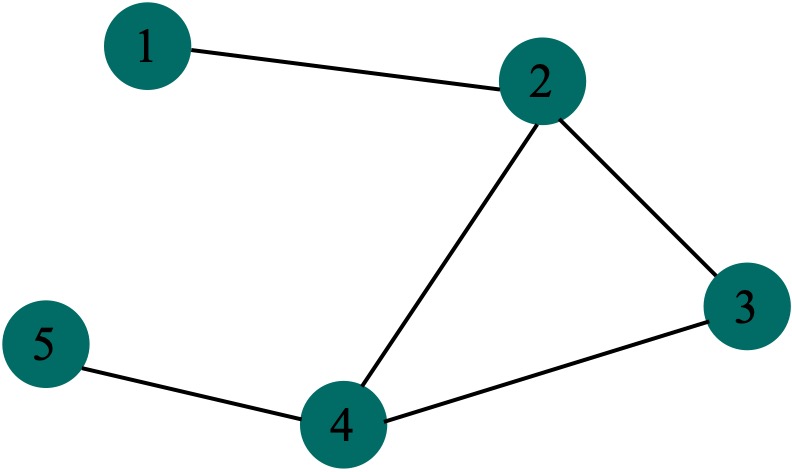}
 \label{fig:communication_setups_distributed}
    }
    \subfigure[Distributed communication-II]{
	\includegraphics[scale=0.13]{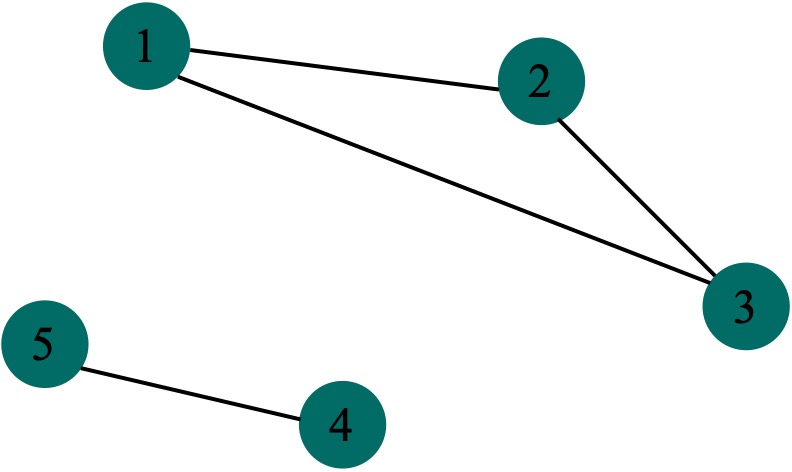}
 \label{fig:communication_setups_hybrid}
    }\hspace{2ex}
    \subfigure[No communication]{
	\includegraphics[scale=0.13]{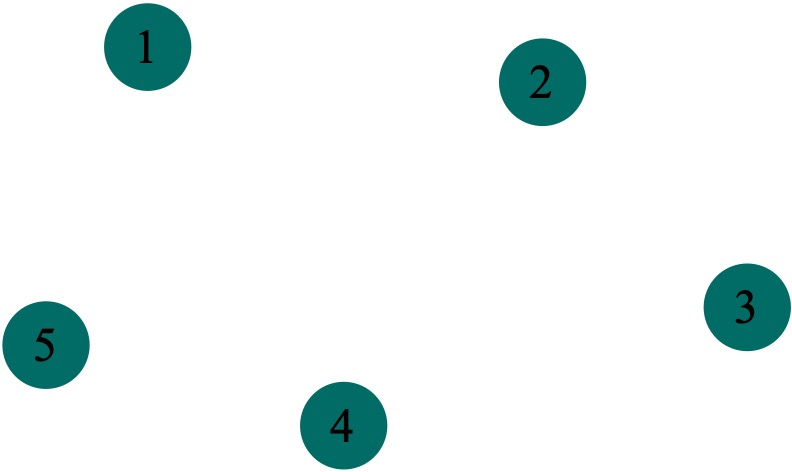}
 \label{fig:communication_setups_decentralized}
    }
  \caption{Examples of different communication infrastructure in DGs. Note that the difference between (b) and (c) is that the former is connected and the latter is not.}
  \label{fig:communication_setups}
\end{figure}

\subsection{Problem Formulation}\label{subsec:problem}

We are interested in dynamically adjusting the reactive power injections $\{q_i\}_{i \in \Nc}$ in~\eqref{eq:nonlinear_pf} for the inverter-interfaced DERs to perform voltage regulation in DGs. Specifically, we aim to design, for each $i \in \Nc$, a function
\begin{align}
    \phi_i: \real^{|\Nc_i| + 1} \rightarrow \Qc_i, ~ (v_i, \{v_j\}_{j \in \Nc_i}) \mapsto  \phi_i(v_i, \{v_j\}_{j \in \Nc_i})
\end{align}
that takes accessible voltage magnitude measurements as input and determines its reactive power injection, where $\Qc_i := \{q_i ~ | ~ q_{i,\min} \leq q_i \leq q_{i,\max}\}$ defines the reactive power control capability of bus~$i$.

A first consideration for this type of design is how to ensure the stability of the closed-loop dynamics~\eqref{eq:nonlinear_pf} resulting from dynamically adjusting the reactive power injections $\{q_i\}_{i \in \Nc}$ via the feedback mechanisms $\{\phi_i\}_{i \in \Nc}$.
Related works in the literature, e.g.~\cite{YS-GQ-SL-AA-AW:22,WC-JL-BZ:22,JF-YS-GQ-SHL-AA-AW:24,JF-WC-JC-YS:23-csl,GC-ZY-MKS-JC:22-cdc,ZY-GC-MKS-JC:24-tps,ZY-GC-JC:23-csl,ZY-GC-AZ-JC:24}, consider the case where $\{\phi_i\}_{i \in \Nc}$ are \emph{decentralized}, i.e., each $\phi_i$ is a function solely depending on $v_i$. In general, little is known about how to ensure closed-loop stability for the case when the functions $\{\phi_i\}_{i \in \Nc}$ are not decentralized and employ other communication infrastructure, cf. Fig.~\ref{fig:communication_setups}. A second consideration is to have the reactive power setpoints determined by $\{\phi_i\}_{i \in \Nc}$ approximate the OPF solutions for optimality considerations. While solving the OPF problem requires restrictive communication and massive computational resources, it is unrealistic to assign the outputs $\{\phi_i\}_{i \in \Nc}$ by solving the OPF instances online. Machine learning techniques are often used to find optimal $\{\phi_i\}_{i \in \Nc}$ offline~\cite{YS-GQ-SL-AA-AW:22,WC-JL-BZ:22,JF-YS-GQ-SHL-AA-AW:24,JF-WC-JC-YS:23-csl,GC-ZY-MKS-JC:22-cdc,ZY-GC-MKS-JC:24-tps,ZY-GC-JC:23-csl,ZY-GC-AZ-JC:24}. 
Therefore, it is of interest to figure out how to formulate the learning problem to synthesize desired Volt/Var controllers mimicking the OPF solutions while guaranteeing closed-loop stability.
Below we formalize these problems.

\begin{problem}\longthmtitle{Stability constrained voltage control}
    Consider the DistFlow model~\eqref{eq:nonlinear_pf} together with $\{\phi_i\}_{i \in \Nc}$ to determine the reactive power setpoints, our goal is to answer the following two questions:
    \begin{itemize}
        \item[(Q1)] \textit{What conditions on $\{\phi_i\}_{i \in \Nc}$ are sufficient to guarantee the closed-loop system to be asymptotically stable under arbitrary communication infrastructures?}
        \item[(Q2)] \textit{How to formulate the learning task to find optimal $\{\phi_i\}_{i \in \Nc}$ that mimic the OPF solutions while 
        satisfying the stability conditions?}
    \end{itemize}
\end{problem}
In the following, we answer (Q1) and (Q2) in Section~\ref{sec:stability} and Section~\ref{sec:learning}, respectively.

\section{Unified Stability Conditions on Volt/Var Controllers}\label{sec:stability}
We consider the following incremental control law for steering the reactive power setpoint to the output of $\phi_i$ for all $i \in \Nc$:
\begin{align}\label{eq:bus_react_upd} 
    q_i(t+1) = q_i(t) + \epsilon \Big(\phi_i(v_i(t), \{v_j(t)\}_{j \in \Nc_i}) - q_i(t) \Big).
\end{align}
The parameter $\epsilon \in [0,1]$ is the stepsize to steer $q_i$ to the output of $\phi_i$. Note that this rule is implementable in a distributed fashion over the available communication network. We also remark that given that the output of $\phi_i$ is constrained to $\Qc_i$ for all $i \in \Nc$, if $q_i(0) \in \Qc_i$, then we have $q_i(t) \in \Qc_i$ for all $t \geq 0$ as the new reactive power setpoint would be the convex combination of two points belong to $\Qc_i$. Throughout the paper, we assume that $q_i(0) \in \Qc_i$ for all $i \in \Nc$.

Similar to existing works~\cite{ED-AS:18,GQ-NL:20,HZ-HJL:15}, we consider the linearized version (LinDistFlow)~\cite{HZ-HJL:15} of~\eqref{eq:nonlinear_pf} for ease of theoretical analysis of system stability:
\begin{subequations}\label{eq:linear_pf}
    \begin{align}
P_{mn}-\sum_{(n,k) \in \Ec} P_{n k} & =-p_n, \\
Q_{mn}-\sum_{(n,k) \in \Ec} Q_{n k} & =-q_n, \\
v_m-v_n & =r_{mn} P_{mn}+x_{mn} Q_{mn},
\end{align}
\end{subequations}
which can be written in the following compact form
 \begin{align}\label{eq:v=Rp+Xq}
 \vbf = \Rbf \pbf + \Xbf \qbf + \ones.
 \end{align}
Here, $\Rbf = \Mbf^{-\top} \Dbf_r \Mbf^{-1}$ and $\Xbf = \Mbf^{-\top} \Dbf_x \Mbf^{-1}$, where $\Dbf_r, \Dbf_x \in \real^{N \times N}$ are diagonal matrices which respectively collect $r_{mn}$ and $x_{mn}$ for all $(m,n) \in \Ec$. 

Collecting $\{\phi_i\}_{i \in \Nc}$ into $\phib$, and adopting the linearized power flow equation~\eqref{eq:v=Rp+Xq}, we obtain the closed-loop system dynamics in the following compact form:
\begin{subequations}\label{eq:sys_dyn}
  \begin{align}
    \qbf(t+1) &= \qbf(t) + \epsilon \left(\phib(\vbf(t)) - \qbf(t) \right), \label{eq:sys_dyn_q}  \\
    \vbf(t+1) &= \Xbf \qbf(t+1) + \hat \vbf, \label{eq:sys_dyn_v} 
\end{align}  
\end{subequations}
where it is tacitly assumed that the variability of $\pbf$ is at a much slower timescale than the above iterates, thus the $\pbf$ remains a constant term during the convergence of the iterations of~\eqref{eq:sys_dyn}.
leading to a fixed $\hat \vbf = \Rbf \pbf + \ones$ in convergence analysis of~\eqref{eq:sys_dyn}.

From~\eqref{eq:sys_dyn}, it is easy to see that an equilibrium point $(\vbf^\star,\qbf^\star)$ should satisfy 
    \begin{align}\label{eq:equilibria}
        \phib(\vbf^\star) &= \qbf^\star,~\vbf^\star = \Xbf \qbf^\star + \hat \vbf,
    \end{align}
and thus, consistently with our prior work~\cite{GC-ZY-MKS-JC:22-cdc,ZY-GC-MKS-JC:24-tps}, we refer $\{\phi_i\}_{i \in \Nc}$ to as \emph{equilibrium functions}.

\begin{remark}\longthmtitle{Local asymptotic stability of the original nonlinear dynamics}
   In the closed-loop stability analysis here, we rely on the linearized power flow model~\eqref{eq:sys_dyn_v}, which is common in the existing works on reactive power control, see e.g.~\cite{NL-GQ-MD:14,HZ-HJL:15,GC-RC:17,XZ-MF-ZL-LC-SHL:21}. We emphasize that, although the linearization of the power flow model adds limitations to the theoretical results itself, the global asymptotic stability of the closed-loop system with linearized power flow model (as we will show later) still implies local asymptotic stability of the closed-loop system with original nonlinear power flow model in~\eqref{eq:nonlinear_pf}, cf.~\cite[Theorem 3.3]{NB-RC-LS:18}. We leave the characterization of the region of attraction for future work.
   \oprocend
\end{remark}

\subsection{Existence and Uniqueness of the Equilibrium}
Here, we study the properties of equilibrium points of the closed-loop system~\eqref{eq:sys_dyn}.
The following result identifies a sufficient condition on these equilibrium functions so that there exists a unique equilibrium point.
\begin{proposition}\longthmtitle{Existence and uniqueness of the equilibrium}\label{prop:unique_equi}
    Consider the closed-loop system~\eqref{eq:sys_dyn}. If $\phib$ is a monotonically decreasing function w.r.t. $\vbf$, i.e.,
    \begin{align}\label{eq:monotonicity}
        \left( \phib(\vbf) - \phib(\vbf^\prime)  \right)^\top \left( \vbf - \vbf^\prime \right) \leq 0,
    \end{align}
    or equivalently,
    \begin{align}\label{eq:monotonicity-separate}
        \sum_{i \in \Nc} \left( \phi_i(v_i, \{v_j\}_{j \in \Nc_i}) - \phi_i(v_i^\prime, \{v_j^\prime\}_{j \in \Nc_i}) \right) (v_i - v_i^\prime) \leq 0
    \end{align}
    holds $\forall~\vbf,\vbf^\prime \in \real^{N}$, then~\eqref{eq:sys_dyn} admits a unique equilibrium.
\end{proposition}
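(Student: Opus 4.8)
The plan is to eliminate $\qbf^\star$ from the two equilibrium conditions in~\eqref{eq:equilibria} and recast the statement as a fixed-point problem. Substituting $\qbf^\star = \phib(\vbf^\star)$ into $\vbf^\star = \Xbf\qbf^\star + \hat\vbf$, the equilibria of~\eqref{eq:sys_dyn} are in bijection with the fixed points of the map $F:\real^N\to\real^N$, $F(\vbf) = \Xbf\,\phib(\vbf) + \hat\vbf$, so it suffices to show $F$ has exactly one fixed point. I would use throughout that $\Xbf = \Mbf^{-\top}\Dbf_x\Mbf^{-1}$ is symmetric positive definite, since $\Dbf_x\succ 0$ and $\Mbf$ is invertible for a tree; hence $\Xbf^{-1}$ exists and $\Xbf^{-1}\succ 0$.

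For \emph{uniqueness}, suppose $\vbf_1,\vbf_2$ are both fixed points. Subtracting the defining identities gives $\phib(\vbf_1) - \phib(\vbf_2) = \Xbf^{-1}(\vbf_1 - \vbf_2)$, and pairing with $\vbf_1 - \vbf_2$ yields
\begin{align*}
\big(\phib(\vbf_1) - \phib(\vbf_2)\big)^\top(\vbf_1 - \vbf_2) = (\vbf_1 - \vbf_2)^\top \Xbf^{-1}(\vbf_1 - \vbf_2) \ge 0,
\end{align*}
with equality only when $\vbf_1 = \vbf_2$ because $\Xbf^{-1}\succ 0$. Since the monotonicity hypothesis~\eqref{eq:monotonicity} forces the left-hand side to be nonpositive, the quadratic form must vanish and hence $\vbf_1 = \vbf_2$.

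For \emph{existence}, I would exploit that each $\phi_i$ is valued in the compact interval $\Qc_i$, so $\phib$ maps $\real^N$ into the compact convex box $\Qc := \prod_{i\in\Nc}\Qc_i$; consequently $F(\real^N)\subseteq K := \Xbf\Qc + \hat\vbf$, a compact convex set, and in particular $F$ maps $K$ into itself. Assuming $\phib$ is continuous --- as it is for all controller parameterizations considered in this paper --- $F$ is a continuous self-map of the compact convex set $K$, so Brouwer's fixed-point theorem provides a fixed point $\vbf^\star\in K$; then $(\vbf^\star,\phib(\vbf^\star))$ is the unique equilibrium of~\eqref{eq:sys_dyn}.

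The monotonicity manipulation in the uniqueness step is routine; the part requiring care is existence, and the observation that makes it short is that the bounded controller outputs trap the closed-loop voltage state in the compact convex set $K$, so no coercivity or growth argument is needed. Absent such box constraints, the alternative would be to note that $\vbf\mapsto \Xbf^{-1}\vbf - \phib(\vbf)$ is strongly monotone (a strongly monotone map plus a monotone map) and hence a homeomorphism of $\real^N$ whose unique zero is the unique fixed point of $F$; but with the constraint sets $\Qc_i$ in place, invoking Brouwer on $K$ is cleaner. Either way, positive definiteness of $\Xbf$ is the property driving both conclusions.
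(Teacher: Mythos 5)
Your proof is correct and follows essentially the same route as the paper's: both establish existence via Brouwer's fixed-point theorem on a compact convex set determined by the reactive-power box $\Qc$ (the paper applies it to $\qbf \mapsto \phib(\Xbf\qbf + \hat\vbf)$ on $\Qc$, you to the affinely equivalent map $\vbf \mapsto \Xbf\phib(\vbf) + \hat\vbf$ on $\Xbf\Qc + \hat\vbf$), and both derive uniqueness from the same pairing of the monotonicity inequality against the positive definiteness of $\Xbf^{-1}$. The only differences are cosmetic (direct argument versus explicit contradiction, and your welcome remark that continuity of $\phib$ is being used).
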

\begin{proof}
    We first show the existence of the equilibrium. From~\eqref{eq:equilibria}, the closed-loop system~\eqref{eq:sys_dyn} admits an equilibrium if there exists a fixed point to
    \begin{align}
        \qbf = \phib(\Xbf \qbf + \hat \vbf) = \tilde \phib (\qbf).
    \end{align}
    where $\tilde \phib: \Qc \mapsto \Qc$. According to Brouwer's Fixed Point Theorem~\cite[Corollary 6.6]{KCB:85}, since $\Qc$ is convex and compact, such fixed point must exist.
    
    Now we show the uniqueness of the equilibrium, and we reason by contradiction. Suppose there exist two equilibrium points for~\eqref{eq:sys_dyn}, namely $(\qbf^*,\vbf^*)$ and $(\qbf^\sharp,\vbf^\sharp)$, with $\vbf^* \neq \vbf^\sharp$.
    From~\eqref{eq:equilibria}, $\phib(\vbf^*) = \qbf^*$ and $\phib(\vbf^\sharp) = \qbf^\sharp$ hold. Therefore,
    \begin{align}\label{eq:contradiction_1}
         \qbf^* - \qbf^\sharp = \phib(\vbf^*) - \phib(\vbf^\sharp).
    \end{align}
    From this, we deduce that
    \begin{align}
        (\vbf^* \!-\! \vbf^\sharp)^\top (\qbf^* \!-\! \qbf^\sharp) \!=\! (\vbf^* \!-\! \vbf^\sharp)^\top \left( \phib(\vbf^*) \!-\! \phib(\vbf^\sharp) \right) \! \leq \! 0,
    \end{align}
    where we have leveraged the monotonicity property~\eqref{eq:monotonicity}.    
    On the other hand, it follows from~\eqref{eq:equilibria} that $\vbf^* - \vbf^\sharp = \Xbf(\qbf^* - \qbf^\sharp)$, and thus
    \begin{align}\label{eq:contradiction_2}
        \qbf^* - \qbf^\sharp = \Xbf^{-1} (\vbf^* - \vbf^\sharp),
    \end{align}
    which implies that  
            \begin{align}
            (\vbf^* \!-\! \vbf^\sharp)^\top (\qbf^* \!-\! \qbf^\sharp) \!=\! (\vbf^* \!-\! \vbf^\sharp)^\top \Xbf^{-1} (\vbf^* -    \vbf^\sharp) > 0, 
    \end{align}
    leading to a contradiction.
\end{proof}

\subsection{Global Asymptotic Stability of the Equilibrium}

Given the fact that the closed-loop system~\eqref{eq:sys_dyn} admits a unique equilibrium under Proposition~\ref{prop:unique_equi}, we provide in the next result a sufficient condition on $\epsilon$ such that this unique equilibrium point is globally asymptotically stable.

\begin{theorem}\longthmtitle{Global asymptotic stability of the equilibrium}\label{thm:aymp_stability}
Consider the closed-loop system~\eqref{eq:sys_dyn} and let $\phib$ be monotonically decreasing function w.r.t. $\vbf$, cf.~\eqref{eq:monotonicity}. If $\epsilon$ satisfies
\begin{align}\label{eq:cond_epsilon}
    \epsilon < \min\left\{1, \frac{2}{1+L^2\|\Xbf\|^2}\right\},
\end{align}
where $L = \max \limits_{x,y \in \Bc} \frac{\|\phib(x) - \phib(y)\|}{\|x - y\|}$, with $\Bc \in \real^{|\Nc|}$ an arbitrarily large compact set. Then~\eqref{eq:sys_dyn} admits a unique equilibrium which is globally asymptotically stable.
\end{theorem}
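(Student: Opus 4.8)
My plan is to eliminate the reactive power variable, obtaining a self‑contained recursion in $\vbf$, and then certify contraction with a quadratic Lyapunov function whose weight matrix is chosen precisely so that the collective monotonicity condition~\eqref{eq:monotonicity} annihilates the cross term in the one‑step increment. First I would substitute~\eqref{eq:sys_dyn_q} into~\eqref{eq:sys_dyn_v} and use $\vbf(t)=\Xbf\qbf(t)+\hat\vbf$ to rewrite the closed loop as $\vbf(t+1)=(1-\epsilon)\vbf(t)+\epsilon\big(\Xbf\phib(\vbf(t))+\hat\vbf\big)$. Its fixed point is the $\vbf$-component $\vbf^\star$ of the equilibrium, which exists and is unique by Proposition~\ref{prop:unique_equi} (the hypothesis that $\phib$ is monotonically decreasing is exactly what that result needs). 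Setting $\tilde\vbf(t):=\vbf(t)-\vbf^\star$ and subtracting the equilibrium identity $\vbf^\star=(1-\epsilon)\vbf^\star+\epsilon(\Xbf\phib(\vbf^\star)+\hat\vbf)$ gives $\tilde\vbf(t+1)=(1-\epsilon)\tilde\vbf(t)+\epsilon\Xbf\,\Delta(t)$ with $\Delta(t):=\phib(\vbf(t))-\phib(\vbf^\star)$.

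The core step is to take $V(\tilde\vbf)=\tilde\vbf^\top\Xbf^{-1}\tilde\vbf$, which is a legitimate positive definite form since $\Xbf=\Mbf^{-\top}\Dbf_x\Mbf^{-1}\succ 0$. Expanding the increment, $V(\tilde\vbf(t+1)) = (1-\epsilon)^2 V(\tilde\vbf(t)) + 2\epsilon(1-\epsilon)\,\tilde\vbf(t)^\top\Delta(t) + \epsilon^2\,\Delta(t)^\top\Xbf\Delta(t)$, where the $\Xbf^{-1}$ from the metric and the $\Xbf$ from the dynamics cancel in the middle term — this cancellation is the whole reason for choosing $\Xbf^{-1}$. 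The middle term then equals $2\epsilon(1-\epsilon)(\vbf(t)-\vbf^\star)^\top(\phib(\vbf(t))-\phib(\vbf^\star))$, which is nonpositive for every $\epsilon\in[0,1]$ by the collective monotonicity inequality~\eqref{eq:monotonicity}. For the quadratic term I would chain $\Delta(t)^\top\Xbf\Delta(t)\le\|\Xbf\|\,\|\Delta(t)\|^2\le L^2\|\Xbf\|\,\|\tilde\vbf(t)\|^2\le L^2\|\Xbf\|^2\,V(\tilde\vbf(t))$, using (i) the Lipschitz bound, which applies since $\qbf(t)\in\Qc$ for all $t$ forces $\vbf(t),\vbf^\star$ into the fixed compact set $\Xbf\Qc+\hat\vbf$, so it suffices to take $\Bc$ containing it; and (ii) $\|\tilde\vbf(t)\|^2\le\|\Xbf\|\,V(\tilde\vbf(t))$, which follows from $\Xbf^{-1}\succeq\|\Xbf\|^{-1}\Ibf$. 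Hence $V(\tilde\vbf(t+1))\le\big((1-\epsilon)^2+\epsilon^2 L^2\|\Xbf\|^2\big)V(\tilde\vbf(t))$.

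It then only remains to observe that the contraction factor is $<1$ exactly under the stated hypothesis: $(1-\epsilon)^2+\epsilon^2 L^2\|\Xbf\|^2<1$ is equivalent, after dividing by $\epsilon>0$, to $\epsilon\,(1+L^2\|\Xbf\|^2)<2$, while the accompanying constraint $\epsilon<1$ in~\eqref{eq:cond_epsilon} simultaneously guarantees forward invariance of $\Qc$ (as noted after~\eqref{eq:bus_react_upd}) and keeps the coefficient $2\epsilon(1-\epsilon)$ nonnegative. Therefore $V$ decreases geometrically, so $\vbf(t)\to\vbf^\star$; since $\tilde\vbf(t)=\Xbf(\qbf(t)-\qbf^\star)$, also $\qbf(t)\to\qbf^\star$, which yields global asymptotic stability of the unique equilibrium.

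The part I expect to be genuinely nontrivial is not any of these computations but the identification of the weight $\Xbf^{-1}$ in the Lyapunov function. With the plain Euclidean metric, the monotonicity of $\phib$ is expressed in the $\vbf$-coordinates whereas the closed-loop increment carries the extra factor $\Xbf$, so the cross term one would have to sign, $\tilde\vbf^\top\Xbf\Delta$, is not sign-definite; one then either cannot use~\eqref{eq:monotonicity} at all or incurs a spurious dependence on the condition number of $\Xbf$. Indeed, requiring the cross term to reduce to $\tilde\vbf^\top\Delta$ forces the weight to be a multiple of $\Xbf^{-1}$, so once that observation is made the rest is bookkeeping.
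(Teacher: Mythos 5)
Your proposal is correct and follows essentially the same route as the paper's proof: the same Lyapunov function $\|\vbf(t)-\vbf^\star\|_{\Xbf^{-1}}$, the same expansion in which the $\Xbf^{-1}$ metric cancels the $\Xbf$ in the dynamics so that collective monotonicity~\eqref{eq:monotonicity} kills the cross term, and the same pair of bounds $\|\Delta\|_{\Xbf}\le L^2\|\Xbf\|\,\|\tilde\vbf\|^2$ and $\|\tilde\vbf\|_{\Xbf^{-1}}\ge\|\Xbf\|^{-1}\|\tilde\vbf\|^2$ yielding the threshold $2/(1+L^2\|\Xbf\|^2)$. Your packaging via a uniform contraction factor $(1-\epsilon)^2+\epsilon^2L^2\|\Xbf\|^2<1$ (rather than the paper's pointwise negativity of $D(t+1)-D(t)$) and your explicit justification that $\Bc$ may be taken as the compact set $\Xbf\Qc+\hat\vbf$ are minor refinements, not a different argument.
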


\begin{proof}
    According to Proposition~\ref{prop:unique_equi}, there exists a unique equilibrium $(\qbf^\star,\vbf^\star)$ for~\eqref{eq:sys_dyn}. Consider the following discrete-time Lyapunov function
    \begin{align}\label{eq:Lyapunov}
        D(t) = \|\vbf(t) - \vbf^\star\|_{\Xbf^{-1}},
    \end{align}
    which measures the distance between $\vbf(t)$ and the equilibrium $\vbf^\star$. It then follows that
    \begin{align}
        &D(t+1) \notag \\
        &= \|\vbf(t+1) - \vbf^\star\|_{\Xbf^{-1}} \notag\\
        &= \| \Xbf \left( \qbf(t) + \epsilon \left(\phib(\vbf(t)) - \qbf(t) \right) \right) + \hat \vbf - \vbf^\star  \|_{\Xbf^{-1}} \notag \\
        &= \| (1-\epsilon)\Xbf\qbf(t) + \epsilon \Xbf\phib(\vbf(t)) + \hat \vbf - \vbf^\star \|_{\Xbf^{-1}} \notag \\
        &= \|(1-\epsilon)(\vbf(t) - \vbf^\star) + \epsilon(\Xbf\phib(\vbf(t)) + \hat\vbf - \vbf^\star) \|_{\Xbf^{-1}} \notag \\
        &= \|(1-\epsilon)(\vbf(t) - \vbf^\star) + \epsilon\Xbf \left(\phib(\vbf(t)) - \phib(\vbf^\star \right) \|_{\Xbf^{-1}},
    \end{align}
    where we use the fact $\vbf^\star = \Xbf\qbf^\star + \hat\vbf = \Xbf\phib(\vbf^\star) + \hat\vbf$ in the last equality. Expanding this expression,
    \begin{align}
        D(t+1) =& (1-\epsilon)^2 \| \vbf(t) - \vbf^\star \|_{\Xbf^{-1}} \notag \\
        &+ 2\epsilon(1-\epsilon) \left( \phib(\vbf(t)) - \phib(\vbf^\star)  \right)^\top \left( \vbf(t) - \vbf^\star \right) \notag \\
        &+ \epsilon^2 \| \phib(\vbf(t)) - \phib(\vbf^\star)  \|_{\Xbf}.
    \end{align}
    Therefore, leveraging the monotonicity property~\eqref{eq:monotonicity}, we have
    \begin{align}\label{eq:lyapunov_diff}
        &D(t+1) - D(t) \notag\\
        &= \epsilon(\epsilon-2) \| \vbf(t) - \vbf^\star \|_{\Xbf^{-1}} + \epsilon^2 \| \phib(\vbf(t)) - \phib(\vbf^\star )\|_{\Xbf} \notag \\
        &\quad + \underbrace{2\epsilon(1-\epsilon) \left( \phib(\vbf(t)) - \phib(\vbf^\star)  \right)^\top \left( \vbf(t) - \vbf^\star \right)}_{\leq 0} \notag\\
        &\leq \epsilon(\epsilon-2) \| \vbf(t) - \vbf^\star \|_{\Xbf^{-1}} + \epsilon^2 \| \phib(\vbf(t)) - \phib(\vbf^\star ) \|_{\Xbf}.
    \end{align}
    We show next that $D(t+1) - D(t) < 0$ for all $t \geq 0$ whenever $\vbf(t) \neq \vbf^\star$, which implies the global asymptotic stability of the equilibrium. It is straightforward to see that, for this inequality to hold, it is sufficient to ask that
    \begin{align}\label{eq:cond_epsilon_inverse}
        \epsilon < \min_{t \geq 0} \left\{ \frac{2 \| \vbf(t) - \vbf^\star \|_{\Xbf^{-1}}}{\| \vbf(t) \!-\! \vbf^\star \|_{\Xbf^{-1}} \!+\! \| \phib(\vbf(t)) \!-\! \phib(\vbf^\star ) \|_{\Xbf}} \right\} ,
    \end{align}
    which is equivalent to
    \begin{align}
        \frac{1}{\epsilon} > \max_{t \geq 0} \left\{ \frac{\| \vbf(t) \!-\! \vbf^\star \|_{\Xbf^{-1}} \!+\! \| \phib(\vbf(t)) - \phib(\vbf^\star ) \|_{\Xbf}}{2 \| \vbf(t) \!-\! \vbf^\star \|_{\Xbf^{-1}}} \right\}.
    \end{align}
    Note that $\| \phib(\vbf(t)) - \phib(\vbf^\star ) \|_{\Xbf} \leq L^2 \|\Xbf\| \|\vbf(t) - \vbf^\star \|^2$, while $\| \vbf(t) - \vbf^\star \|_{\Xbf^{-1}} \geq \|\Xbf\|^{-1} \|\vbf(t) - \vbf^\star \|^2$.
    It follows that
    \begin{align}
        &\max_{t \geq 0} \left\{ \frac{\| \vbf(t) - \vbf^\star \|_{\Xbf^{-1}} + \| \left(\phib(\vbf(t)) - \phib(\vbf^\star \right) \|_{\Xbf}}{2 \| \vbf(t) - \vbf^\star \|_{\Xbf^{-1}}} \right\} \notag\\
        & \leq \frac{1}{2} + \frac{L^2\|\Xbf\|}{2 \|\Xbf\|^{-1}} = \frac{1+L^2\|\Xbf\|^2}{2}.
    \end{align}
    Therefore, $\epsilon < \frac{2}{1+L^2\|\Xbf\|^2}$ is sufficient to make sure~\eqref{eq:cond_epsilon_inverse} hold. This, together with $\epsilon \in [0,1]$, completes the proof.
\end{proof}
Theorem~\ref{thm:aymp_stability} indicates that given $\phib$ is a monotonic function w.r.t. $\vbf$, one can always find a small enough $\epsilon$ such that the closed-loop system is globally asymptotically stable under~\eqref{eq:bus_react_upd}. The upper bound of $\epsilon$ relies on the network parameter $\| \Xbf \|$ and the Lipschitz constant $L$ for selected $\phib$, which can be easily computed offline before the online implementation of the controllers. Or one can continuously scale down $\epsilon$ during the online implementation until the controllers work well.

\begin{remark}\longthmtitle{Less conservative stability conditions compared to prior work and the role of communication in improving performance}\label{rmk:conservativeness}
Previous works~\cite{GC-ZY-MKS-JC:22-cdc,ZY-GC-MKS-JC:24-tps,JF-YS-GQ-SHL-AA-AW:24,JF-WC-JC-YS:23-csl} have identified properties on decentralized voltage controllers to ensure closed-loop stability. Considering the same incremental update rule as~\eqref{eq:bus_react_upd}, the proof techniques in these works would amount to the following inequality for each bus to ensure closed-loop stability in the context of this work:
    \begin{align}\label{eq:prior_cond}
        \left( \phi_i(v_i, \{v_j\}_{j \in \Nc_i}) - \phi_i(v_i^\prime, \{v_j^\prime\}_{j \in \Nc_i}) \right) (v_i - v_i^\prime) \leq 0,
    \end{align}
    for all $i \in \Nc$ and $\vbf,\vbf^\prime \in \real^N$, i.e., a monotonicity condition for decentralized controller at each bus $i \in \Nc$. 
    These conditions imply~\eqref{eq:monotonicity}, but not the other way around. Thus, the condition obtained in this paper is more general.  
    
    In fact, for our proposed framework, the condition is relaxed such that the controllers are only required to collectively satisfy the monotonicity condition. For instance, condition~\eqref{eq:monotonicity} is satisfied if 
    \begin{subequations}\label{eq:new_cond_budget}
    \begin{align}
        \left( \phi_i(v_i, \{v_j\}_{j \in \Nc_i}) - \phi_i(v_i^\prime, \{v_j^\prime\}_{j \in \Nc_i}) \right) (v_i - v_i^\prime) &\leq b_i, \label{eq:new_cond_budget_a}\\
        \sum_{i \in \Nc} b_i &= 0. \label{eq:new_cond_budget_b}
    \end{align}
    \end{subequations}
    This allows some buses to violate~\eqref{eq:prior_cond} up to a level that can be compensated by other buses to still make the overall summation~\eqref{eq:monotonicity-separate} non-positive. We can think of $\{b_i\}_{i \in \Nc}$ as \emph{budgets}, cf.~\cite{ZY-CZ-JC:24-scl}, that allow us to make the stability condition less conservative. The condition~\eqref{eq:prior_cond} turns out to be a special case of~\eqref{eq:new_cond_budget} with $\{b_i\}_{i \in \Nc}$ all zero. 
    
    This comparison also highlights the value of the role of communication:~\eqref{eq:new_cond_budget_b} can be enforced via coordination among buses with non-zero $b_i$'s, providing more flexibility to improve the overall performance.
    Another important observation is that, unlike most of the existing voltage controller designs, e.g.,~\cite{KT-PS-SB-MC:11,IEEE1547,HZ-HJL:15}, our framework here allows the controller at each bus $i \in \Nc$ to not be necessarily monotonic w.r.t. its local voltage magnitude, while still guaranteeing closed-loop stability. We envision this is particularly meaningful in the design of provably stable voltage controllers for DGs with relatively small capable DERs, and in scenarios where the system operator aims on minimizing power losses instead of voltage deviations, as in such cases the monotonicity requirement on every bus might be the main source of conservativeness~\cite{ZY-GC-JC:23-csl}.
    \oprocend
\end{remark}

Our discussion so far has shown the uniqueness of equilibrium and its global asymptotic stability assuming no uncertainties are present. However, in practice, the system dynamics might be subject to different sources of error, including disturbances, noise, and communication latency. To account for these, we instead consider the dynamics
\begin{subequations}\label{eq:sys_dyn_disturbed}
  \begin{align}
    \qbf(t+1) &= \qbf(t) + \epsilon \left(\phib(\vbf(t)) - \qbf(t) \right) + \dbf_{\qbf}, \label{eq:sys_dyn_disturbed_q}  \\
    \vbf(t+1) &= \Xbf \qbf(t+1) + \hat \vbf + \dbf_{\vbf}, \label{eq:sys_dyn_disturbed_v} 
\end{align}  
\end{subequations}
where $\dbf_\qbf$ and $\dbf_\vbf$ encode the error terms. The following result establishes the input-to-state stability (ISS) of~\eqref{eq:sys_dyn_disturbed}. 

\begin{proposition}\longthmtitle{Input-to-state stability (ISS) of the closed-loop system~\eqref{eq:sys_dyn_disturbed}}\label{prop:ISS}
Let $\phib$ be monotonically decreasing w.r.t. $\vbf$, cf.~\eqref{eq:monotonicity}, and $\epsilon$ satisfy the condition~\eqref{eq:cond_epsilon}. Then, the closed-loop system~\eqref{eq:sys_dyn_disturbed} is input-to-state stable.
\end{proposition}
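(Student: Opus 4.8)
The plan is to reuse the Lyapunov function $D(t) = \|\vbf(t) - \vbf^\star\|_{\Xbf^{-1}}$ from the proof of Theorem~\ref{thm:aymp_stability}, where $(\qbf^\star,\vbf^\star)$ is the unique equilibrium of the \emph{undisturbed} system~\eqref{eq:sys_dyn}, and to show that $D$ is an ISS-Lyapunov function for~\eqref{eq:sys_dyn_disturbed}. First I would substitute the disturbed update~\eqref{eq:sys_dyn_disturbed_q}--\eqref{eq:sys_dyn_disturbed_v} into $D(t+1)$ and, exactly as in the nominal proof, rewrite the argument as $\vbf(t+1) - \vbf^\star = (1-\epsilon)(\vbf(t)-\vbf^\star) + \epsilon \Xbf(\phib(\vbf(t)) - \phib(\vbf^\star)) + \Xbf\dbf_\qbf + \dbf_\vbf$, so that the only change relative to the nominal case is an additive perturbation term $\wbf := \Xbf\dbf_\qbf + \dbf_\vbf$ inside the $\Xbf^{-1}$-norm. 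Expanding $\|a + \wbf\|_{\Xbf^{-1}}$ with $a := (1-\epsilon)(\vbf(t)-\vbf^\star) + \epsilon\Xbf(\phib(\vbf(t))-\phib(\vbf^\star))$ gives the nominal terms (whose sum is $\le (1 - \gamma) D(t)$ for some $\gamma > 0$ under condition~\eqref{eq:cond_epsilon}, by the computation already carried out in Theorem~\ref{thm:aymp_stability}) plus the cross term $2 a^\top \Xbf^{-1}\wbf$ and the term $\|\wbf\|_{\Xbf^{-1}}$.

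The second step is to bound these extra terms. Here it is cleanest to switch momentarily to the weighted \emph{norm} $\sqrt{D}$ — note $D = \|\cdot\|_{\Xbf^{-1}}$ in the paper's notation is the \emph{squared} weighted norm — and apply the triangle inequality: $\sqrt{D(t+1)} \le \sqrt{1-\gamma}\,\sqrt{D(t)} + \|\wbf\|_{\Xbf^{-1}}^{1/2}$, where I use that $\sqrt{\|a\|_{\Xbf^{-1}}} \le \sqrt{1-\gamma}\,\sqrt{D(t)}$ is precisely what the Theorem~\ref{thm:aymp_stability} estimate delivers. Since $\|\wbf\|_{\Xbf^{-1}}^{1/2} \le \|\Xbf^{-1}\|^{1/2}\|\wbf\| \le \|\Xbf^{-1}\|^{1/2}(\|\Xbf\|\,\|\dbf_\qbf\| + \|\dbf_\vbf\|)$, iterating the one-step contraction and using $\sqrt{1-\gamma} < 1$ yields $\sqrt{D(t)} \le (1-\gamma)^{t/2}\sqrt{D(0)} + \frac{1}{1-\sqrt{1-\gamma}}\sup_{s}\|\wbf(s)\|_{\Xbf^{-1}}^{1/2}$, which is an ISS estimate in the $\vbf$ variable with a $\Kc\Lc$ transient term and a $\Kc_\infty$ gain in $\|(\dbf_\qbf,\dbf_\vbf)\|_\infty$. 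Finally, since $\Xbf$ is invertible and $\qbf(t)$ is recovered from $\vbf(t)$ via $\qbf(t) = \Xbf^{-1}(\vbf(t) - \hat\vbf - \dbf_\vbf)$, the ISS bound in $\vbf$ transfers to an ISS bound in the full state $(\qbf,\vbf)$ (or one can equally run the same argument tracking $(\qbf(t)-\qbf^\star, \vbf(t)-\vbf^\star)$ jointly). One subtlety worth a sentence in the writeup: with nonzero $\dbf_\qbf$ the invariance $q_i(t)\in\Qc_i$ may fail, so $L$ should be taken as the Lipschitz constant of $\phib$ on a large enough compact $\Bc$ as in Theorem~\ref{thm:aymp_stability}, and one notes the trajectory stays in a bounded set depending on the disturbance magnitude, which is consistent with the local/semiglobal flavor already present in the nominal result.

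The main obstacle is bookkeeping rather than conceptual: one must be careful that $D$ in the paper is a squared weighted norm, so the "$D(t+1) \le (1-\gamma)D(t) + \text{gain}$" form one might naively write is not directly in ISS-Lyapunov form with a linear-in-disturbance gain — taking square roots (equivalently, working with the genuine norm $\|\cdot\|$ weighted by $\Xbf^{-1/2}$) and invoking the triangle inequality is what makes the comparison/iteration argument go through cleanly. A secondary point is making explicit which $\gamma>0$ comes out of condition~\eqref{eq:cond_epsilon}: from~\eqref{eq:lyapunov_diff} and the subsequent bounds in Theorem~\ref{thm:aymp_stability}, one has $\|a\|_{\Xbf^{-1}} \le D(t) - \epsilon\big(2 - \epsilon(1 + L^2\|\Xbf\|^2)\big)\|\Xbf\|^{-1}\|\vbf(t)-\vbf^\star\|^2 \le (1-\gamma)D(t)$ with $\gamma := \epsilon\big(2 - \epsilon(1+L^2\|\Xbf\|^2)\big)\|\Xbf\|^{-1}\|\Xbf^{-1}\|^{-1} > 0$ precisely when~\eqref{eq:cond_epsilon} holds; once this $\gamma$ is in hand the rest is the standard discrete-time ISS argument.
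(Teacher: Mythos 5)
Your proposal is correct and rests on the same two ingredients as the paper's proof: the weighted Lyapunov function $D(t)=\|\vbf(t)-\vbf^\star\|_{\Xbf^{-1}}$ centered at the \emph{undisturbed} equilibrium, and the contraction estimate for the nominal part inherited from Theorem~\ref{thm:aymp_stability}. The execution differs in a way worth noting. The paper stays at the level of the squared norm: it writes $D(t+1)-D(t)\leq \epsilon((1+L^2\|\Xbf\|^2)\epsilon-2)D(t)+\|\dbf_\qbf\|+\|\dbf_\vbf\|_{\Xbf^{-1}}$, declares $D$ an ISS-Lyapunov function, and invokes Jiang--Wang. As displayed, that inequality silently drops the cross terms $2a^\top\Xbf^{-1}\wbf$ between the nominal update $a$ and the perturbation $\wbf=\Xbf\dbf_\qbf+\dbf_\vbf$; making it rigorous requires either Young's inequality (absorbing part of the cross term into the strict decrease, at the price of a larger $\Kc$-gain) or exactly the maneuver you propose: pass to the genuine norm $\sqrt{D}$, apply the triangle inequality to get $\sqrt{D(t+1)}\leq\sqrt{1-\gamma}\,\sqrt{D(t)}+\|\wbf\|_{\Xbf^{-1}}^{1/2}$, and iterate to an explicit $\Kc\Lc$-plus-$\Kc_\infty$ estimate. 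So your route is the more careful one and avoids the citation to the ISS-Lyapunov lemma altogether; what it costs is the extra bookkeeping of transferring the bound from $\vbf$ to the full state $(\qbf,\vbf)$, which the Lyapunov-function formulation gets implicitly. Two minor remarks: your $\gamma$ carries the harmless conservative factor $\|\Xbf\|^{-1}\|\Xbf^{-1}\|^{-1}$, whereas bounding $\|\phib(\vbf)-\phib(\vbf^\star)\|_{\Xbf}\leq L^2\|\Xbf\|^2 D(t)$ directly (as the paper does) yields the cleaner $\gamma=\epsilon\bigl(2-\epsilon(1+L^2\|\Xbf\|^2)\bigr)$; and your observation that $\dbf_\qbf$ may push $\qbf(t)$ outside $\Qc$, so that $L$ must be taken on a sufficiently large compact set, is a genuine subtlety the paper does not address.
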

\begin{proof}
Consider the closed-loop system~\eqref{eq:sys_dyn_disturbed}, together with the Lyapunov function~\eqref{eq:Lyapunov} and condition~\eqref{eq:cond_epsilon}. In view of~\eqref{eq:lyapunov_diff}, it follows that
    \begin{align}
        &D(t+1) - D(t) \notag\\
        &\leq \epsilon(\epsilon - 2) \| \vbf(t) - \vbf^\star \|_{\Xbf^{-1}} + \epsilon^2 \| \phib(\vbf(t)) - \phib(\vbf^\star )\|_{\Xbf} \notag \\
        & \qquad+ \| \dbf_\qbf\| + \| \dbf_\vbf\|_{\Xbf^{-1}} \notag \\
        &\leq \epsilon((1+ L^2\|\Xbf\|^2)\epsilon-2) \| \vbf(t) - \vbf^\star \|_{\Xbf^{-1}} + \| \dbf_\qbf\|  + \| \dbf_\vbf\|_{\Xbf^{-1}} \notag \\
        &= \underbrace{\epsilon((1+ L^2\|\Xbf\|^2)\epsilon-2)}_{< 0} D(t) + \| \dbf_\qbf\|  + \| \dbf_\vbf\|_{\Xbf^{-1}},
    \end{align}
where we have used the fact that $\| \phib(\vbf(t)) - \phib(\vbf^\star ) \|_{\Xbf} \leq L^2 \|\Xbf\| \|\vbf(t) - \vbf^\star \|^2 = L^2 \|\Xbf\|^2 \|\Xbf\|^{-1} \|\vbf(t) - \vbf^\star \|^2 \leq L^2 \|\Xbf\|^2 \| \vbf(t) - \vbf^\star \|_{\Xbf^{-1}}$ in the second inequality. Combining with the fact that $\underline \lambda \|\vbf - \vbf^\star\|^2 \leq D \leq \overline \lambda \|\vbf - \vbf^\star\|^2$, where $\underline \lambda$ and $\overline \lambda$ are respectively the minimum and maximum eigenvalues of $\Xbf^{-1}$, we conclude that $D(t)$ is an ISS-Lyapunov function and thus, by~\cite[Lemma 3.5]{ZPJ-YW:01},
the system is input-to-state stable w.r.t. $[\dbf_\qbf;\dbf_\vbf]$.
\end{proof}

The ISS property established in Proposition~\ref{prop:ISS} is a characterization of the robustness properties of the designed controllers~\eqref{eq:bus_react_upd}. For instance, communication imperfections or noise in reactive power and voltage magnitude measurements give rise to errors modeled by $\dbf_\qbf$, whereas unmodeled dynamics or the actual nonlinearity of the power flow model can be encoded by $\dbf_\vbf$. The ISS property implies the graceful degradation of the dynamical behavior of the system~\eqref{eq:sys_dyn_disturbed}, with the deviation from the nominal behavior being an increasing function of the error magnitude.

%%%%%%%%%%%%%%%%%%%%%%%%%%%%%%%%%%%%%%%%%%%%%%%%%%%%%%%%%%%%%%%%%%%%%%%%%%%%%%%%

\subsection{The Case When Only a Subset of Buses Are Controllable}
The framework we introduce above assumes that all the buses are equipped with DERs that are able to perform reactive power control. However, in some scenarios, there might exist buses which are load buses and have no voltage control capability. We show here how these scenarios can also be covered by our design: as long as the controllable buses satisfy the monotonicity requirement~\eqref{eq:monotonicity-separate}, the stability condition~\eqref{eq:monotonicity} holds automatically.
To be more specific, let $\Cc \subseteq \Nc$ and $\Uc := \Nc \setminus \Cc$ denote the sets of controllable buses and uncontrollable buses, respectively. It is easy to see that the power flow equation~\eqref{eq:v=Rp+Xq} can be partitioned as
\begin{align}
    \begin{bmatrix}
\vbf_\Uc \\
\vbf_\Cc
\end{bmatrix} = \begin{bmatrix}
\Rbf_{\Uc\Uc} & \Rbf_{\Uc\Cc} \\
\Rbf_{\Uc\Cc}^\top & \Rbf_{\Cc\Cc} 
\end{bmatrix} \begin{bmatrix}
\pbf_\Uc \\
\pbf_\Cc
\end{bmatrix} + \begin{bmatrix}
\Xbf_{\Uc\Uc} & \Xbf_{\Uc\Cc} \\
\Xbf_{\Uc\Cc}^\top & \Xbf_{\Cc\Cc} 
\end{bmatrix} \begin{bmatrix}
\qbf_\Uc \\
\qbf_\Cc
\end{bmatrix} + \ones,
\end{align}
where $\Rbf_{\Uc\Uc},\Rbf_{\Cc\Cc},\Xbf_{\Uc\Uc},\Xbf_{\Cc\Cc} \succ 0$~\cite{RAH-CRJ:12} and it follows that
\begin{align}\label{eq:pf_subset}
    \vbf_\Cc = \Xbf_{\Cc\Cc} \qbf_\Cc + \tilde{\vbf},
\end{align}
in which $\tilde{\vbf} = \Rbf_{\Uc\Cc}^\top \pbf_\Uc + \Rbf_{\Cc\Cc} \pbf_\Cc + \Xbf_{\Uc\Cc}^\top \qbf_\Uc + \ones$. Therefore, one can derive the closed-loop system dynamics in the same form as~\eqref{eq:sys_dyn}, and thus our theoretical results keep valid.

\section{Learning Stable Voltage Controllers under Arbitrary Communication Infrastructures}\label{sec:learning}
Given the identified constraints on $\{\phi_i\}_{i \in \Nc}$ to ensure the closed-loop system stability, here we are interested in developing a data-driven framework for synthesizing optimal $\{\phi_i\}_{i \in \Nc}$ which satisfy the stability constraints by design given an arbitrary communication infrastructure. This gives rise to the following optimization problem:
\begin{subequations}\label{eq:ORPF}
	\begin{align}
	\min_{\{\phi_i\}_{i \in \Nc}}\ &  ~ \sum_{k=1}^{K}  \sum_{t=0}^{T-1} f(\pbf^k,\qbf(t)) \label{eq:ORPF:cost}\\
	\mathrm{s.t.}\  & ~ \textrm{Power Flow Model}~\eqref{eq:nonlinear_pf} \label{eq:ORPF:pf}\\
                        & ~ \qbf(t+1) = \qbf(t) + \epsilon \left(\phib(\vbf(t)) - \qbf(t) \right) \label{eq:ORPF:controller} \\
                        & ~ \phib~\textrm{is monotone w.r.t.}~\vbf \label{eq:ORPF:stability} \\
                        & ~ \phib \in \Qc := \times_{i \in \Nc} \Qc_i \label{eq:ORPF:reactive_pwr_bound}
	\end{align} 
\end{subequations}
where we aim to optimize the cost function of interest $f: \real^{2N} \mapsto \real$ in~\eqref{eq:ORPF:cost} for a total of $K$ load-generation scenarios, with $T$ the transient time horizon for each scenario. 
A common choice of the cost function $f$ is the penalization on the voltage deviation from desired limits, i.e., $\|\vbf(\pbf^k,\qbf(t)) - \ones\|$. Other formulations may consider electric losses in the network or the deviation from a pre-determined substation power trajectory. The constraint~\eqref{eq:ORPF:reactive_pwr_bound} defines the capability of reactive power compensation, while the constraints in~\eqref{eq:ORPF:stability} ensure that the designed controllers render the closed-loop system globally asymptotically stable to a unique equilibrium, cf.~Theorem~\ref{thm:aymp_stability}.
Note that we do not explicitly include voltage constraints in~\eqref{eq:ORPF} % to make it well-defined
because we do not make any assumptions on the load/generation scenario and there are reactive power capability constraints~\eqref{eq:ORPF:reactive_pwr_bound}.
\begin{figure*}[htb]
    \centering
    \includegraphics[width=0.95\textwidth]{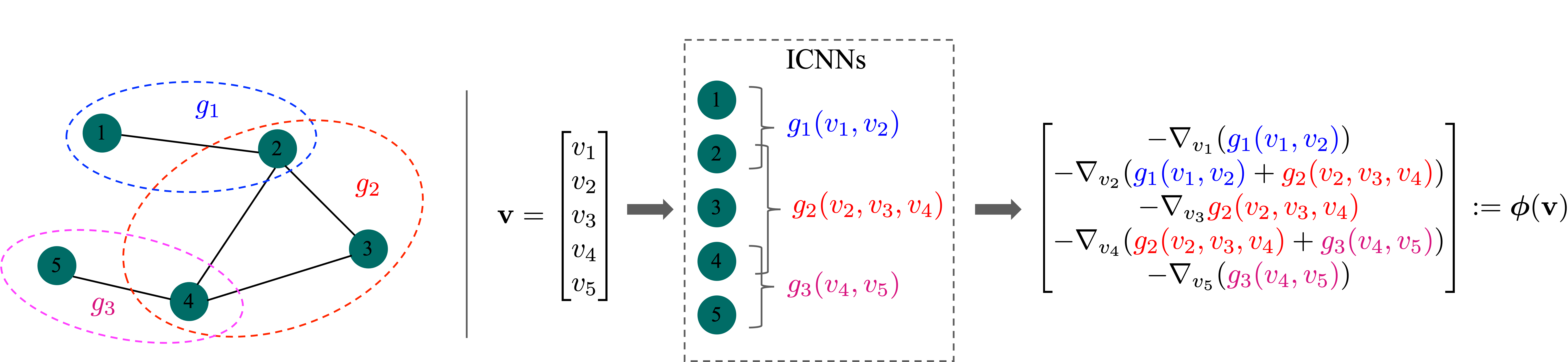}
    \caption{Illustration of the proposed design procedure to compute equilibrium functions under the distributed communication infrastructure example in Fig.~\ref{fig:communication_setups_distributed}.}
    \label{fig:monotone_design_distributed}
\end{figure*}
The optimization problem~\eqref{eq:ORPF} is not amenable to relevant learning algorithms because of the constraint~\eqref{eq:ORPF:stability}, which restricts the behavior of the control policy to be learned. 

Similar to our prior works~\cite{JF-YS-GQ-SHL-AA-AW:24,ZY-GC-MKS-JC:24-tps}, the idea to solve~\eqref{eq:ORPF} is to construct structured neural networks for parameterizing $\{\phi_i\}_{i \in \Nc}$ such that~\eqref{eq:ORPF:stability} is satisfied by design. Unlike the decentralized case in~\cite{JF-YS-GQ-SHL-AA-AW:24,ZY-GC-MKS-JC:24-tps}, where each $\{\phi_i\}_{i \in \Nc}$ is simply paramterized by a single-input single-output monotone Rectified Linear Unit (ReLU) neural network, here we focus on the case that there is a communication infrastructure that allows $\{\phi_i\}_{i \in \Nc}$ to coordinate with each other to satisfy the less-conservative \emph{collective monotone} requirement~\eqref{eq:monotonicity}. In the next, we draw inspirations from~\cite{WC-YJ-BZ-YS:23} and propose the following unified design procedure to strictly achieve this by design under arbitrary communication infrastructure leveraging input convex neural networks (ICNNs)~\cite{BA-LX-JZK:17}, which involves the following steps.
\subsubsection*{Step 1: Communication network partition} Partition the communication graph into subgraphs such that the buses in each subgraph are all-to-all connected. Note that one bus may appear in multiple subgraphs.
\subsubsection*{Step 2: ICNN for each subgraph} Suppose there are in total $S$ subgraphs indexed by $\Sc \triangleq \{1,...,S\}$ and the $\ell$-th subgraph contains buses $\Mc_\ell \subseteq \Nc$. Design ICNNs $g_\ell: \real^{|\Mc_\ell|} \rightarrow \real$ for all $\ell \in \Sc$, where the inputs are voltage magnitude measurements from the buses in the $\ell$-th subgraph.
\subsubsection*{Step 3: Derivation of $\{\phi_i\}_{i \in \Nc}$ through computing gradients of ICNNs} For each $i \in \Nc$, let $\phib(\vbf) := - \sum_{\ell \in \Sc} \nabla_{\vbf} g_\ell(\vbf)$.

We emphasize that this procedure is naturally suitable for different communication scenarios, as every bus only uses the information available from itself as well as its communicating neighbors to determine its equilibrium function. To be more illustrative, we show in Fig.~\ref{fig:monotone_design_distributed} how this procedure works for the distributed communication infrastructure displayed in Fig.~\ref{fig:communication_setups_distributed}. We note that the design procedure works similarly for other arbitrary communication infrastructures. In the next result, we provide a formal result that verifies this design procedure yields equilibrium functions that satisfy~\eqref{eq:monotonicity}.

\begin{proposition}\longthmtitle{Unified equilibrium function design procedure}\label{prop:design_procedure}
Given that all the functions $\{g_\ell\}_{\ell \in \Sc}$ are convex, the obtained equilibrium functions $\{\phi_i\}_{i \in \Nc}$ are guaranteed to satisfy~\eqref{eq:monotonicity} under arbitrary communication infrastructure.   
\end{proposition}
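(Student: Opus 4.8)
The plan is to show that the vector field $\phib(\vbf) = -\sum_{\ell \in \Sc} \nabla_{\vbf} g_\ell(\vbf)$ is monotonically decreasing in the sense of~\eqref{eq:monotonicity}, and then to verify separately that the construction respects the communication structure, i.e., that $\phi_i$ depends only on $v_i$ and $\{v_j\}_{j \in \Nc_i}$. For the monotonicity part, the key observation is that the gradient of a convex function is a monotone operator: for any convex differentiable $h: \real^n \to \real$, one has $(\nabla h(\xbf) - \nabla h(\ybf))^\top(\xbf - \ybf) \ge 0$ for all $\xbf, \ybf$. I would apply this to each $g_\ell$ after extending it to a function of the full voltage vector $\vbf \in \real^N$ by letting it ignore the coordinates outside $\Mc_\ell$ (equivalently, compose $g_\ell$ with the coordinate projection onto $\real^{|\Mc_\ell|}$). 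This extension is still convex, and its gradient in $\real^N$ is just $\nabla_{\vbf} g_\ell(\vbf)$ with zeros in the coordinates not belonging to $\Mc_\ell$. Summing over $\ell$ preserves monotonicity, so $\sum_{\ell} \nabla_{\vbf} g_\ell(\vbf)$ is a monotone operator, and hence its negation satisfies $(\phib(\vbf) - \phib(\vbf'))^\top(\vbf - \vbf') \le 0$, which is exactly~\eqref{eq:monotonicity}.

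Concretely the steps are: (i) recall/state the standard fact that the gradient of a convex function is a monotone map (this follows from adding the two first-order convexity inequalities $h(\ybf) \ge h(\xbf) + \nabla h(\xbf)^\top(\ybf-\xbf)$ and $h(\xbf) \ge h(\ybf) + \nabla h(\ybf)^\top(\xbf-\ybf)$); (ii) observe that each $g_\ell$, viewed as a function on $\real^N$ through the projection onto its block of coordinates $\Mc_\ell$, remains convex, so its gradient is monotone on $\real^N$; (iii) sum these monotone maps over $\ell \in \Sc$, using that a nonnegative combination of monotone operators is monotone, to conclude $\sum_{\ell} \nabla_{\vbf} g_\ell$ is monotone; (iv) negate to obtain~\eqref{eq:monotonicity} for $\phib$; (v) check the locality: for bus $i$, the $i$-th component of $\nabla_{\vbf} g_\ell$ is nonzero only when $i \in \Mc_\ell$, and by Step~1 of the design procedure each subgraph containing $i$ consists of buses all-to-all connected to $i$, hence a subset of $\{i\} \cup \Nc_i$; therefore $\phi_i = -\sum_{\ell : i \in \Mc_\ell} \partial g_\ell / \partial v_i$ depends only on $v_i$ and $\{v_j\}_{j \in \Nc_i}$, as required by the controller structure.

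The main obstacle, such as it is, is bookkeeping rather than mathematics: one must be careful that ICNNs are in general only piecewise-differentiable (they use ReLU activations), so $\nabla_{\vbf} g_\ell$ is defined almost everywhere and one should either work with subgradients throughout (the monotonicity of the subdifferential of a convex function is equally standard) or note that a suitable selection from the subdifferential still yields a monotone, and hence well-behaved, operator. I would phrase the argument using subgradients to sidestep any differentiability concerns, remarking that maximal monotonicity of $\partial g_\ell$ gives~\eqref{eq:monotonicity} for any measurable selection. With that caveat handled, the rest is immediate from convex analysis and the partition property established in Step~1.
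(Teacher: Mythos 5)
Your proof takes essentially the same route as the paper's: the gradient of each convex $g_\ell$ is a monotone operator, a sum of monotone operators is monotone, and negating gives~\eqref{eq:monotonicity}. The additional care you take with the coordinate-projection extension to $\real^N$, the locality check against the communication structure, and the subgradient treatment of ReLU nondifferentiability goes beyond what the paper writes down, but the core argument is identical and correct.
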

\begin{proof}
    It follows that
    \begin{align}
        &\left( \phib(\vbf) - \phib(\vbf^\prime)  \right)^\top \left( \vbf - \vbf^\prime \right) \notag\\
        &= - \Big(\sum_{\ell \in \Sc} \nabla_{\vbf} g_\ell(\vbf) - \sum_{\ell \in \Sc} \nabla_{\vbf} g_\ell(\vbf^\prime) \Big) \left( \vbf - \vbf^\prime \right) \notag\\
        &= - \sum_{\ell \in \Sc} \left(\nabla_{\vbf} g_\ell(\vbf) - \nabla_{\vbf} g_\ell(\vbf^\prime) \right) \left( \vbf - \vbf^\prime \right).
    \end{align}
    Since $g_\ell$ is convex for all $\ell \in \Sc$, it holds that $- \left(\nabla_{\vbf} g_\ell(\vbf) - \nabla_{\vbf} g_\ell(\vbf^\prime) \right)^\top \left( \vbf - \vbf^\prime \right) \leq 0$, we conclude that~\eqref{eq:monotonicity} is satisfied.
\end{proof}
We note that, since the ICNN has universal approximation capability for any convex function~\cite{WC-YJ-BZ-YS:23}, the above design does not pose any limitations on the expressiveness of $\phib$. Given Proposition~\ref{prop:design_procedure}, we are now able to rewrite the optimization problem~\eqref{eq:ORPF} into the following learning-amenable form\footnote{We add the projection operator in~\eqref{eq:ORPF-new:q} to comply with the operational constraints in practical implementation, which is not considered in the theoretical analysis.}:
\begin{subequations}\label{eq:ORPF-new}
	\begin{align}
	\min_{\{\alpha_\ell,g_\ell\}_{\ell \in \Sc}}\ &  ~ \sum_{k=1}^{K}  \sum_{t=0}^{T-1} f(\pbf^k,\qbf(t)) \label{eq:ORPF-new:cost}\\
	\mathrm{s.t.}\  & ~ \textrm{Power Flow Model~\eqref{eq:nonlinear_pf}} \\
                        & ~ \qbf(t+1) = \qbf(t) + \epsilon \left(\phib(\vbf(t)) - \qbf(t) \right) \label{eq:ORPF-new:controller}\\
                        & ~\phi_i(\vbf) = \proj_{\Qc_i} (- \nabla_{v_i} \sum_{\ell \in \Sc} g_\ell(\vbf)), \forall i \in \Nc \label{eq:ORPF-new:q} \\
                        & ~\{g_\ell\}_{\ell \in \Sc}~\textrm{are convex}.
	\end{align} 
\end{subequations}
By parameterizing $\{g_\ell\}_{\ell \in \Sc}$ using ICNNs, problem~\eqref{eq:ORPF-new} can be tractably solved using relevant machine learning algorithms. 

\section{Simulations}
Here, we evaluate our approach on the reduced
UCSD microgrid testbed~\cite{BW-JD-DW-JK-NB-WT-CR:13} with different levels of communication. Though our theoretical analysis is based on the linearized power flow model, all experiments in this section are run using the nonlinear power flow simulator in Pandapower~\cite{LT-AS-FS-JHM-JD-FM-SM-MB:18} to evaluate the algorithm performance.

\subsection{UCSD Microgrid Testbed and the Dataset}
The network model of the reduced UCSD microgrid testbed is shown in Fig.~\ref{fig:ucsd_microgrid},
showing the locations of the load and PV generators. This model consists of a total of 49 buses (including a substation bus), with 13 of them equipped with PV generators participating in voltage regulation. These are located at buses $\Cc = \{14,15,17,19,20,27,29,30,32,34,38,39,41\}$ and its nominal voltage magnitude is $12.47$ kV. 
We provide in Table~\ref{tab:microgrid_resistence_reactance} (in Appendix) the network's equivalent resistances and reactances. 
\begin{figure*}[t]
    \centering
    \includegraphics[width=0.75\textwidth]{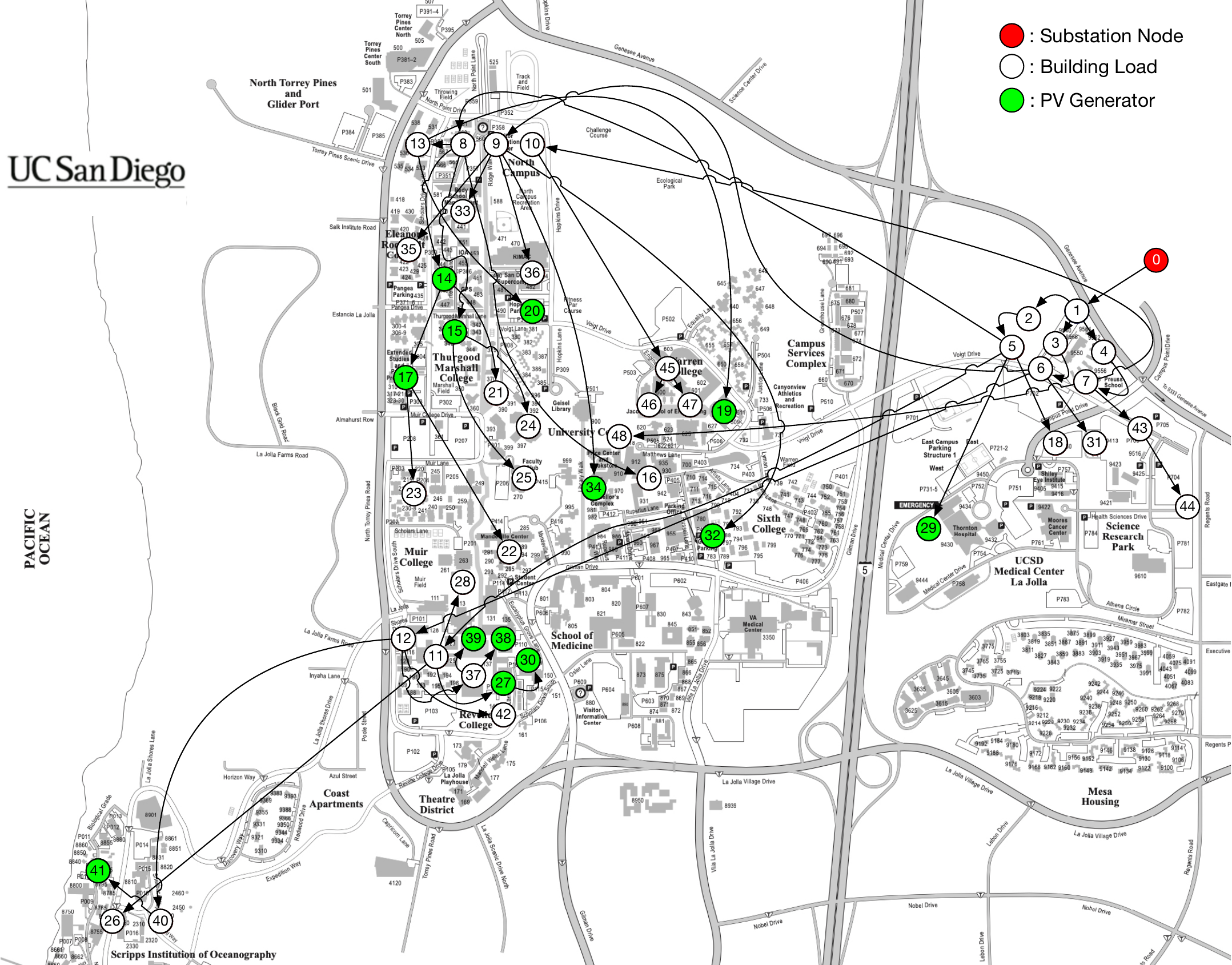}
    \caption{Reduced-order model of the UCSD microgrid testbed with 13 PV generators equipped with voltage controllers located at nodes $\Cc = \{14,15,17,19,20,27,29,30,32,34,38,39,41\}$. }
    \label{fig:ucsd_microgrid}
\end{figure*}
\begin{figure}[htb]
    \centering
    \includegraphics[width=0.8\linewidth]{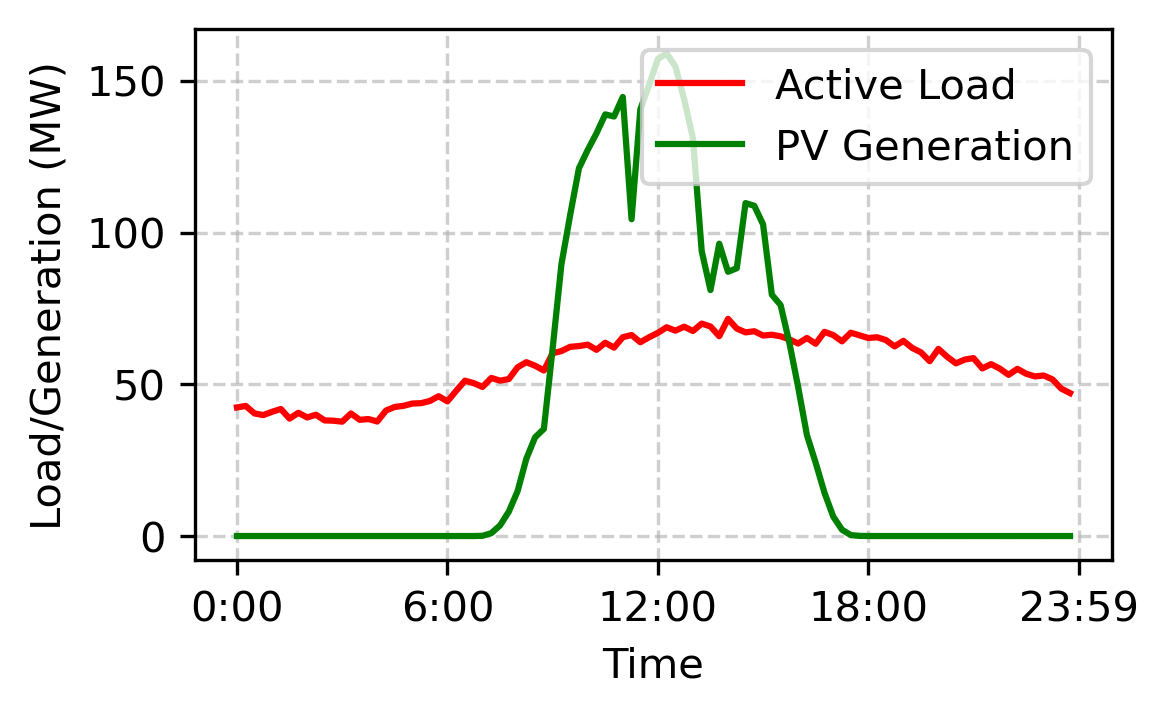}
    \caption{The load and generation curves across the UCSD microgrid on February 2, 2020 (Sunday).}
    \label{fig:load_gen}
    \vspace*{-2ex}
\end{figure}
We obtain the load and generation profiles of the UCSD microgrid by aggregating and scaling the real-gathered load and PV generation data at the UCSD campus~\cite{SS-CM-YAC-AG-JD-JK:21} in Feb 2020. We normalize the aggregated building load profiles, and then obtain the active load profiles for each bus by scaling the normalized load profiles so that the demand peak across the network is about 1.8 times of the nominal demand (42 MW~\cite{BW-JD-DW-JK-NB-WT-CR:13}). 
We synthesize the reactive load profiles according to the active load profiles and the power factors of the UCSD microgrid.
Similarly, we obtain the active generation profiles for each PV generator by scaling the 
aggregated PV generation data to induce significant voltage deviations in high voltage scenarios ($> 5\%$).
Fig.~\ref{fig:load_gen} illustrates an instance of the obtained load and generation curves across the network for one day (February 2, 2020).
Each day consists of 96 data points, i.e., the data sampling period is 15 minutes. For training purpose, we generate three times more data for each day by adding $10\%$ random noise to the 96 load and generation data points.

\subsection{Controller and Training Setups}

The controller setups with different communication levels are presented as follows:
\begin{itemize}[leftmargin=*]
    \item \underline{Controllers with No Communication}: There is no communication, see  Fig.~\ref{fig:communication_setups_decentralized}. The communication network is partitioned into $S=13$ communication subgraphs, with each controller only receives its local information. Each $\{g_\ell\}_{\ell \in \{1,..,13\}}$ is parameterized by an ICNN with two hidden layers, and each has $64$ hidden units;
    \item \underline{Controllers with Distributed Communications}: We consider two cases. For Case 1, each PV generator has a relatively small communication radius. The distributed communication network is shown in Fig.~\ref{fig:communication_setups_distributed_1}. The communication network is partitioned into $S=5$ connected subgraphs, represented as $\Gc_1=\{14, 15, 17, 20\}$, $\Gc_2=\{19, 32, 34\}$, $\Gc_3 = \{27, 30, 38, 39\}$, $\Gc_4 = \{29\}$, and $\Gc_5 = \{41\}$. 
    For Case 2, each PV generator has relatively larger communication radius, and the communication network is shown in Fig.~\ref{fig:communication_setups_distributed_2}.
    The communication network is partitioned into $S=2$ connected subgraphs, represented as $\Gc_1=\{14,15,17,19,20,27,29,30,32,34,38,39\}$, $\Gc_2=\{27, 30, 38, 39, 41\}$. Each $\{g_\ell\}_{\ell \in \{1,..,5\}}$ in the first cases and each $\{g_\ell\}_{\ell \in \{1,2\}}$ are all parameterized as an ICNN with two hidden layers, and each has $64$ hidden units; 
    \item \underline{Controllers with Full Communications}: The communication network is all-to-all connected, see Fig.~\ref{fig:communication_setups_centralized}. The communication network is partitioned into $S=1$ all-to-all connected subgraph, which contains all the 13 controlled buses. The only $g_1$ is parameterized using an ICNN with two hidden layers, and each has $64$ hidden units.
\end{itemize}
For brevity, we use \texttt{Ctrl-NC}, \texttt{Ctrl-DC-1}, \texttt{Ctrl-DC-2}, and \texttt{Ctrl-FC} to represent the controllers under no communication, distributed communication - case 1, distributed communication - case 2, and full communication scenarios. We note that the overall communication level gradually increases for these four different communication scenarios\footnote{We note that both \texttt{Ctrl-DC-2} and \texttt{Ctrl-FC} have connected communication graphs. From a classical distributed optimization viewpoint, this might make them equivalent. However, here, the controllers are making one-shot decisions based on the information received from one round of communication, instead of recursively finding a solution through repeated rounds. Therefore, \texttt{Ctrl-FC} has a richer communication level, since each controller has access to more information when compared to the case with \texttt{Ctrl-DC-2}.}, that is, \texttt{Ctrl-NC} $<$ \texttt{Ctrl-DC-1} $<$ \texttt{Ctrl-DC-2} $<$ \texttt{Ctrl-FC}. Finally, we remark that our prior works~\cite{JF-YS-GQ-SHL-AA-AW:24,ZY-GC-MKS-JC:24-tps} belong to \texttt{Ctrl-NC}, where the controllers are decentralized without any communication.

\begin{figure}[tb]
    \centering
    \subfigure[Distributed communication network - Case 1]{
        \includegraphics[scale=0.35]{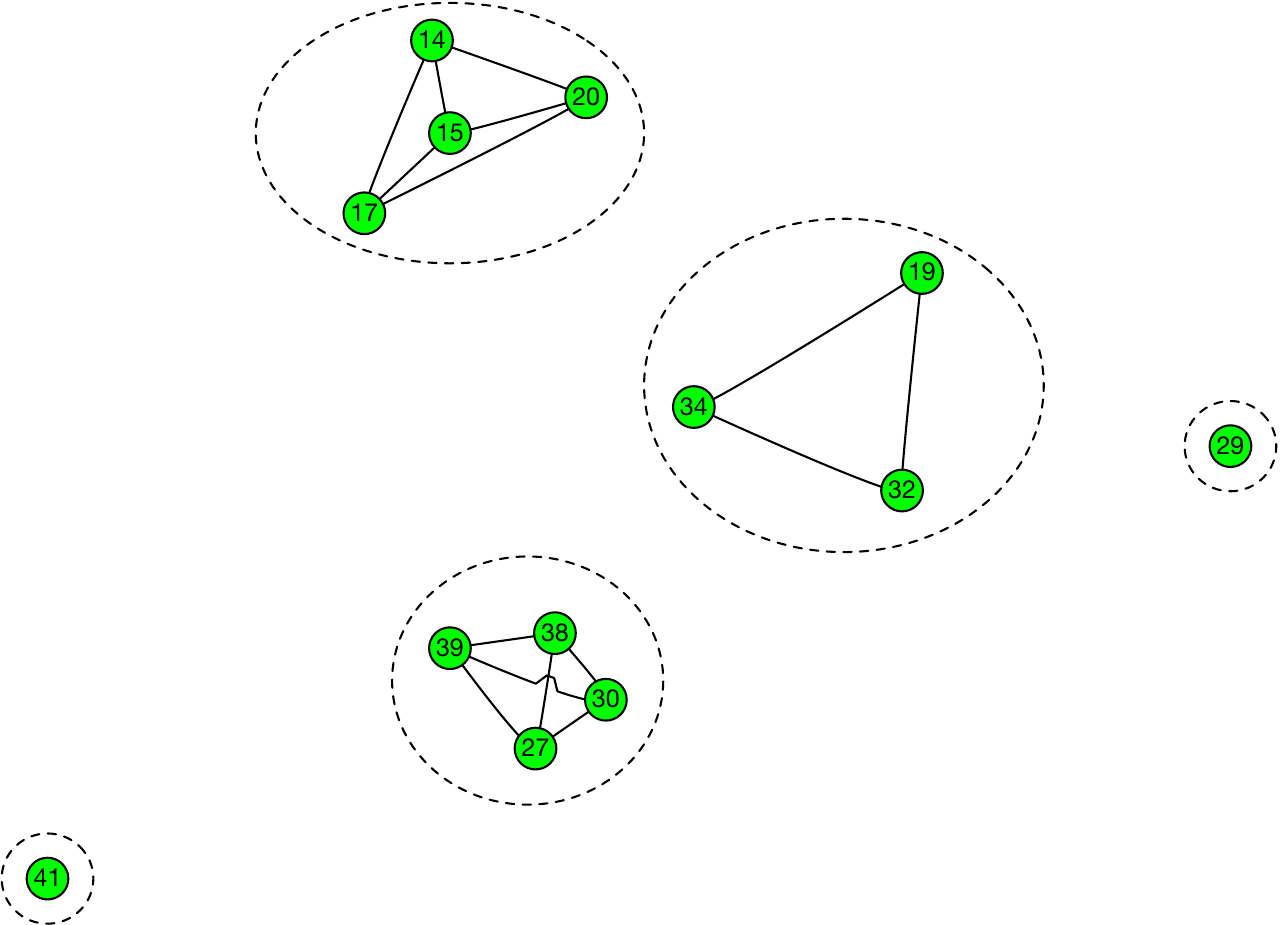}
        \label{fig:communication_setups_distributed_1}
        }
    \subfigure[Distributed communication network - Case 2]{
        \includegraphics[scale=0.35]{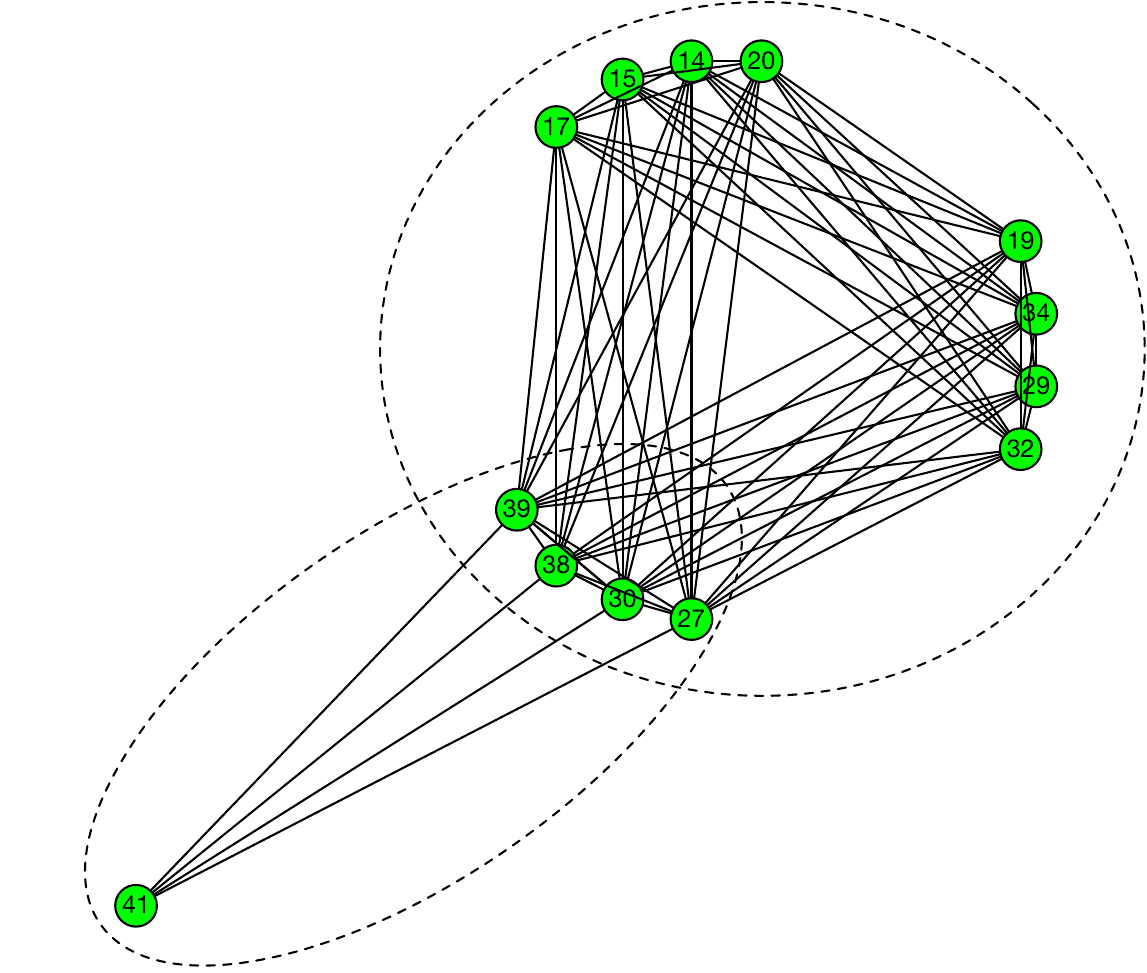}
        \label{fig:communication_setups_distributed_2}
        }
  \caption{Two different distributed Communication networks (we have slightly modified the relative physical location between buses to make the connectivity clearer). For Case 1/Case 2, the networks can be divided into five/two subgraphs (circled by dashed lines) which all have all-to-all communication.}
  \vspace*{-2ex}
\end{figure}

We use supervised learning to train the controllers parameterized by ICNNs. The goal is to make $\phib$ approximate the OPF solutions at every step in~\eqref{eq:ORPF-new}. To achieve this, for each load and generation profile $\{\pbf^k,\qbf_\Uc^k\}_{k=1}^K$ in the dataset, we respectively compute the voltage magnitudes $\vbf^{k}_\Cc$ and the optimal reactive power setpoints $\qbf^{\star,k}_\Cc$ for the PV generators in $\Cc$ by solving the following OPF problem:
\begin{align}
    \min_{\qbf}\ &  ~ f(\pbf^k,\qbf_\Uc^k,\qbf_\Cc) \\
    \mathrm{s.t.}\  & ~ \textrm{Power Flow Model~\eqref{eq:nonlinear_pf}}, \qbf \in \Qc. \notag
\end{align} 
We choose the cost function $f(\pbf,\qbf_\Uc,\qbf)$ as:
\begin{align}
    100\| \vbf(\pbf,\qbf_\Uc,\qbf_\Cc) - \ones \|^2 + \begin{bmatrix}
\qbf_\Uc^\top \ \qbf_\Cc^\top
\end{bmatrix}^\top \Rbf \begin{bmatrix}
\qbf_\Uc \\
\qbf_\Cc
\end{bmatrix},
\end{align}
where the first term is to penalize the voltage violations while the second term is to minimize the power losses induced by reactive power injections~\cite{GC-RC:17}. We refer to the first term as the \emph{voltage deviation cost} and the second term \emph{power loss cost}. We multiply by a factor $100$ the first term to balance the values between these two cost terms. We set the reactive power capability constraint $\Qc$ to be $\{\qbf_\Cc: -\qbf_{\rm lim} \leq \qbf_\Cc \leq \qbf_{\rm lim} \}$, where $\qbf_{\rm lim} = [2,2,2,2,2,5,2,5,5,5,5,5,5]^\top$ (MVar). 

By collecting $\vbf^{k}_\Cc$ and $\qbf^{\star,k}_\Cc$ for all $k \in \{1,...,K\}$ into the labeled dataset, the supervised learning can be easily realized by solving $\min_{\phib} \sum_{k=1}^K \| \qbf^{\star,k}_\Cc - \phib(\vbf^{k}_\Cc) \|$ using supervised learning algorithms
under different communication scenarios in an offline and centralized fashion.

\subsection{Simulation Results}\label{sec:simu_results}
We train the controllers using the load and generation data of the first 22 days in the dataset. To begin with, we analyze the trained controllers at buses 27 and 29 to gain an intuitive understanding of its behavior, as shown in Figure~\ref{fig:ucsd_controllers}. 
The $y$-axis shows the outputs $\phi_{27}(\vbf_\Cc^k)$ and $\phi_{29}(\vbf_\Cc^k)$ under different communication scenarios for all $\{\vbf_\Cc^k\}_{k=1}^K$ in the labeled dataset as well as the OPF solutions.
The OPF solutions, influenced by all buses across the network, demonstrates highly versatile behavior at a selected bus. This is called the \emph{data inconsistency} phenomenon~\cite{ZY-GC-MKS-JC:24-tps}, as a fixed voltage magnitude at a single bus could have multiple different optimal control actions due to different status of other buses, and the optimal control actions do not form an significant monotonically decreasing shape w.r.t. voltage magnitude. Consequently, the \texttt{Ctrl-FC}, with global observation, closely aligns with the OPF solutions, while the \texttt{Ctrl-NC}, relying solely on local information, captures only the general trend. While for the \texttt{Ctrl-DC-1} and \texttt{Ctrl-DC-2}, with partial communication, achieve prediction accuracy between \texttt{Ctrl-FC} and \texttt{Ctrl-NC}. We also note that \texttt{Ctrl-DC-2} predicts better than \texttt{Ctrl-DC-1} as the former has a higher communication level compared to the latter. This also validates our statement in Remark~\ref{rmk:conservativeness} that the control action for each bus is forced to be monotonically decreasing w.r.t. local voltage measurement for the \texttt{Ctrl-NC} to ensure closed-loop stability, and the inclusion of communication can break it to certain level to make the control action more flexible and accurate.
\begin{figure}[t]
    \centering
    \subfigure[Control actions at PV generator 27]{
    \includegraphics[width=0.9\linewidth]{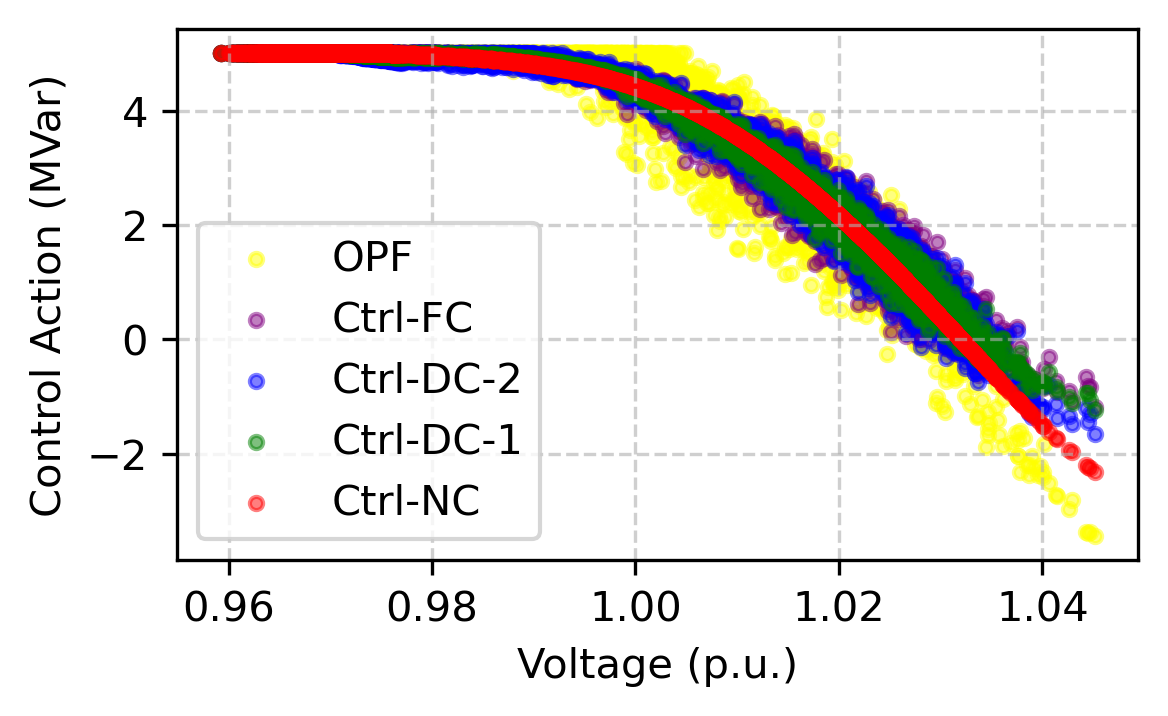}
    }
    \subfigure[Control actions at PV generator 29]{
    \includegraphics[width=0.9\linewidth]{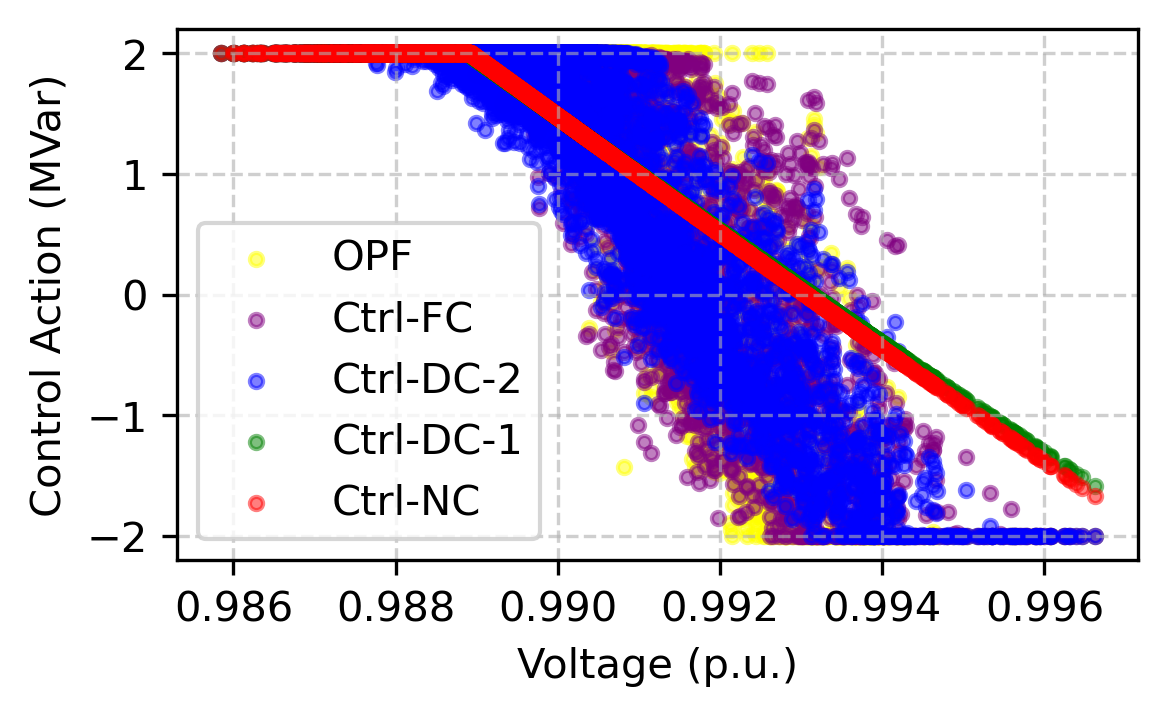}
    }
    \caption{Control actions of the trained controllers at (a) bus 27 and (b) bus 29 under controllers with different communication scenarios.}
    \label{fig:ucsd_controllers}
\end{figure}

\begin{figure*}[t]
    \centering
    \includegraphics[width=0.9\linewidth]{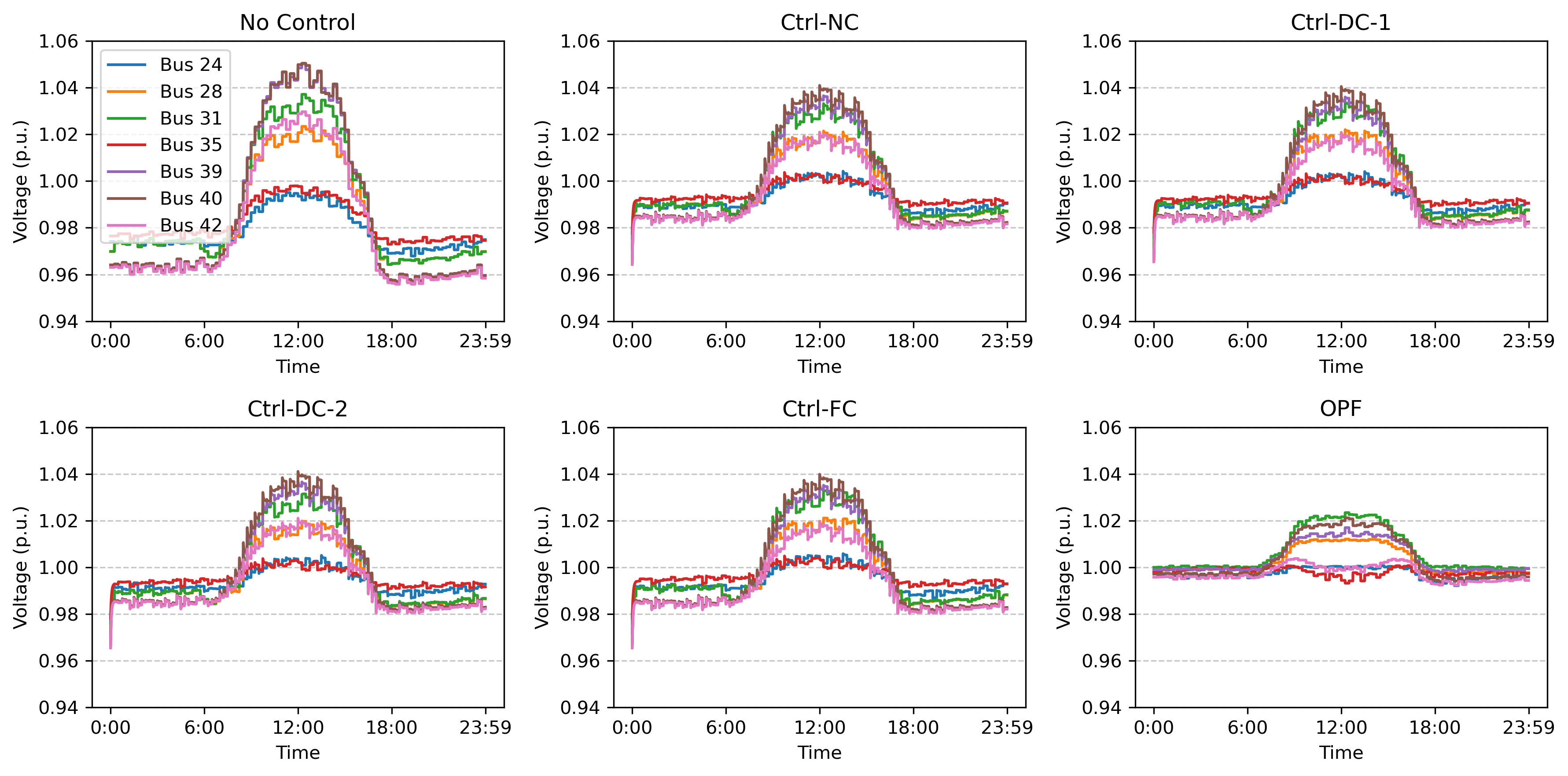}
    \caption{Voltage trajectories for selected buses (buses 24, 28, 31, 35, 39, 40, 42) under the \texttt{No Ctrl}, \texttt{Ctrl-NC}, \texttt{Ctrl-DC-1}, \texttt{Ctrl-DC-2}, \texttt{Ctrl-FC}, as well as \texttt{OPF} using the load and generation data on February 29, 2020 with random noise.}
    \label{fig:voltage_traj}
\end{figure*}

Next, we test the trained controllers in the UCSD microgrid using the load and generation data on February 29, 2020 (unseen during training) with $10\%$ random noise. Specifically, we consider $T=30$, i.e., for each data point, the reactive power iterations~\eqref{eq:bus_react_upd} run for 30 times, and the stepsize $\epsilon$ in controller \eqref{eq:bus_react_upd} is selected to be $0.1$. To show the effectiveness of the proposed framework, we compare in Fig.~\ref{fig:voltage_traj} the voltage trajectories for selected buses under the \texttt{Ctrl-DC-1} as well as \texttt{No Ctrl} and \texttt{OPF}, which represent the no control case and OPF case, respectively. 

It can be observed that compared to the \texttt{No Ctrl} case, the voltage deviations under \texttt{Ctrl-DC-1} are effectively mitigated, especially for low-voltage scenarios. 
To quantitatively compare the control performance under different communication scenarios, we provide the daily accumulated cost for different controllers in Table~\ref{table:voltage_control_ucsd}. Since all controllers are trained using OPF solutions through supervised learning, they all achieve significant improvement compared to the \texttt{No Ctrl} case. Notably, as communication level increases, the overall performance steadily improves, and the behavior of the controller gradually become closer to the \texttt{OPF} solutions, in the sense that it tends to slightly increase the power loss cost for more significant reduction of voltage deviation cost. 
\begin{table}[tb]
\centering
\caption{Accumulated cost along the day on UCSD Microgrid for controllers under different communication scenarios. }%The maximum time step is 100.
 \label{table:voltage_control_ucsd}
 \begin{tabular}{ccccc}
    \toprule
    {Controller} & {Cost-Volt} & {Cost-Loss} & {Total Cost} & Improvement\\
    \midrule
    \texttt{No Ctrl} & 2.0614 & 0.3143 & 2.3757 & - \\ 
    \texttt{Ctrl-NC} & 0.6148 & 0.1571  & 0.7719 & 67.5\% \\ 
    \texttt{Ctrl-DC-1} & 0.6001 & 0.1574 & 0.7575 & 68.1\% \\  
    \texttt{Ctrl-DC-2} & 0.5588 & 0.1623 & 0.7211 & 69.6\% \\
    \texttt{Ctrl-FC} & 0.5334 & 0.1653 & 0.6987 & 70.6\%  \\
    \texttt{OPF} & 0.1883 & 0.2673 & 0.4556 &  80.8\% \\
    \bottomrule
\end{tabular}
\end{table}

Finally, we test the robustness of the proposed framework. We use the same  controllers learned above but consider the case where there exists random voltage magnitude measurement noise. 
Table~\ref{tab:cost_noise} summarizes the averaged cost under different levels of voltage magnitude measurement noise.
Specifically, 0.5\% noise in the voltage measurements corresponds to a common level of precision among smart meters in the United States~\cite{EEI-AEIC-UTC:11}, whereas the case of 1\% noise represents the biggest error allowed in power systems. It can be observed that even under 1\% noise, all controllers under the proposed framework exhibit robustness, with higher level of communication leading to better capability for retaining noise-free performance under measurement noise.

\begin{table}[ht]
\centering
\caption{Accumulated cost under different levels of voltage magnitude measurement noise}
\begin{tabular}{c c c c c}
 \toprule
Noise & \texttt{Ctrl-NC} & \texttt{Ctrl-DC-1} & \texttt{Ctrl-DC-2} & \texttt{Ctrl-FC} \\
\midrule
0\%  & 0.7719  & 0.7575 & 0.7211 & 0.6987 \\
0.5\% & 0.7728 & 0.7580 & 0.7215 & 0.6987 \\
1.0\% & 0.7751 & 0.7598 & 0.7233 & 0.6994 \\
    \bottomrule
\end{tabular}
\label{tab:cost_noise}
\end{table}

\subsection{Discussion}
The simulation results above illustrate the effectiveness of the proposed framework, and indicate that stronger communication capability helps enhance the performance of the designed voltage controllers in the DG. Particularly, we have the following interpretations and remarks:
\begin{itemize}
    \item It can be observed that the \texttt{Ctrl-FC} still has a relatively large optimality gap compared to the \texttt{OPF} case. The reason is two-fold. First, even with full communication, the controllers still lack the information from the uncontrolled buses, while for the \texttt{OPF} case, the global information across the network is utilized. Second, the monotonicity requirement can introduce certain conservativeness when fitting to the OPF solutions. We note that there is a trade-off between theoretical guarantees (closed-loop stability) and the optimality of the controllers, as monotonicity is only a sufficient condition but not necessary for ensuring stability;  
    \item In the simulations above, we do not explicitly highlight the closed-loop stability guarantees of the proposed framework. We note that this has been showcased in our prior works~\cite{ZY-GC-MKS-JC:24-tps,JF-YS-GQ-SHL-AA-AW:24} that without monotonicity constraints on the controllers, the learned controller may easily lead to instability issues for the closed-loop system, especially when the voltage magnitudes are significantly higher or lower than the nominal value; 
    \item The test results above show that control performance improves as the level of communication increases. 
    However, precisely characterizing this dependency is theoretically challenging, not only in the present context, but in network control problems in general, and we leave it for future work. For the problem considered here, we note that a good starting point is considering the physical nature of distribution networks. It is known that buses that have further electrical distance to the substation may suffer from more serious voltage drops. These buses might not have enough control capability to regulate the voltage. Instead, buses that have a closer electrical distance to the substation do not contribute much to the voltage regulation because their local voltage might be satisfactory, thereby wasting their control capability. In such cases, it would be particularly helpful to deploy communication resources to the buses that have further electrical distance to the substation, especially adding communication links between such buses and buses with relatively strong control capability.
\end{itemize}

\section{Conclusions and Future Work}

We have presented a unified framework that allows us to design provably stable voltage controllers in distribution grids endowed with arbitrary communication infrastructures. The key enabler behind the proposed framework is that all local-acting controllers need to collectively satisfy a network-wide monotonicity condition, which ensures the closed-loop system is asymptotically stable. We have provided a design procedure that guarantees the derived local-acting controllers satisfy the network-wide monotonicity/stability condition under an arbitrary communication infrastructure and employed supervised learning to find the optimal ones. Simulation results with real-world data from the UCSD microgrid validate the effectiveness of the framework and reveal the role of communication in improving the control performance. Future work will include the theoretical characterization of the relationship between the connectivity of communication networks and control performance, which, from the experimental perspective, is that stronger connectivity implies better control performance. Other extensions include expanding the proposed framework to the case where active power is also controllable, adding safety limits of the voltage magnitudes during transient, considering inverter dynamics in the stability analysis, and characterizing the region of attraction when taking into account the nonlinear power flow model.

\bibliographystyle{ieeetr}

%% FOR REVISED VERSION
\appendix

\begin{table}[htb]
\centering
\caption{UCSD microgrid line parameters.}
 \begin{tabular}{c c c c c c c c}
 \toprule 
 From & To & $r_{mn} (\Omega)$ & $x_{mn} (\Omega)$ & From & To & $r_{mn} (\Omega)$ & $x_{mn} (\Omega)$ \\
 \midrule
 0 & 1 & $0.0174$ & $0.0002$ & 15 & 25 & $0.7802$ & $1.1703$ \\
 1 & 2 & $0.0232$ & $0.4855$ & 11 & 26 & $0.2722$ & $0.3906$ \\
 1 & 3 & $0.0238$ & $0.4894$ & 11 & 27 & $0.4659$ & $0.6989$ \\
 1 & 4 & $0.0232$ & $0.4778$ & 11 & 28 & $0.4659$ & $0.0002$ \\
 2 & 5 & $0.0185$ & $0.0278$ & 5 & 29 & $0.1126$ & $0.2166$ \\
 3 & 6 & $0.0255$ & $0.0382$ & 27 & 30 & $0.3496$ & $0.5244$ \\
 4 & 7 & $0.0324$ & $0.0486$ & 6 & 31 & $0.5803$ & $0.8704$ \\
 5 & 8 & $0.5452$ & $0.8178$ & 9 & 32 & $0.2547$ & $0.3820$ \\
 6 & 9 & $0.4376$ & $0.6563$ & 9 & 33 & $0.5093$ & $0.7640$ \\
 7 & 10 & $0.4352$ & $0.6529$ & 9 & 34 & $0.3478$ & $0.5218$ \\
 5 & 11 & $0.9417$ & $1.4125$ & 9 & 35 & $0.2722$ & $0.3906$ \\
 6 & 12 & $0.9000$ & $1.3500$ & 9 & 36 & $0.2722$ & $0.0002$ \\
 8 & 13 & $0.9000$ & $0.0005$ & 12 & 37 & $0.0324$ & $0.0486$ \\
 8 & 14 & $0.5961$ & $0.8942$ & 37 & 38 & $0.7802$ & $1.1703$ \\
 8 & 15 & $0.6338$ & $0.9506$ & 37 & 39 & $0.3700$ & $0.3717$ \\
 8 & 16 & $0.1922$ & $0.2882$ & 12 & 40 & $0.1105$ & $0.1658$ \\
 14 & 17 & $0.2275$ & $0.3412$ & 40 & 41 & $0.1109$ & $0.2132$ \\
 5 & 18 & $0.1160$ & $0.2230$ & 12 & 42 & $0.1922$ & $0.2882$ \\
 13 & 19 & $0.1268$ & $0.1901$ & 7 & 43 & $0.1175$ & $0.1762$ \\
 13 & 20 & $0.3478$ & $0.5218$ & 43 & 44 & $0.0191$ & $0.0286$ \\
 14 & 21 & $0.0174$ & $0.0260$ & 10 & 45 & $0.4457$ & $0.6685$ \\
 17 & 22 & $0.1256$ & $0.1884$ & 45 & 46 & $0.0868$ & $0.1302$ \\
 17 & 23 & $0.0174$ & $0.0260$ & 45 & 47 & $0.1105$ & $0.1658$ \\
 15 & 24 & $0.3531$ & $0.5296$ & 7 & 48 & $0.0301$ & $0.0451$ \\
 \bottomrule
 \end{tabular}
 \label{tab:microgrid_resistence_reactance}
 \end{table}

\end{document}